%2multibyte Version: 5.50.0.2960 CodePage: 1251
\documentclass[12pt]{article}%
\usepackage{amssymb}
\usepackage{amsfonts}
\usepackage{amsmath}%
\setcounter{MaxMatrixCols}{30}%
\usepackage{graphicx}
\usepackage{hyperref}
\usepackage{natbib}
\usepackage{float}

\usepackage{xargs}
\usepackage{setspace}
\usepackage[margin=1.25in]{geometry}

\usepackage{todonotes}

\usepackage{tikz}
\usetikzlibrary{patterns}
\usetikzlibrary{decorations.markings}

\newcommandx{\nima}[2][1=]{\todo[linecolor=black,backgroundcolor=black!25,bordercolor=black,#1]{#2}}
\newcommandx{\ron}[2][1=]{\todo[linecolor=red,backgroundcolor=red!25,bordercolor=red,#1]{#2}}

\newtheorem{theorem}{Theorem}

\newtheorem{definition}{Definition}
\newtheorem{example}{Example}

\newtheorem{lemma}{Lemma}

\newtheorem{proposition}{Proposition}

\usepackage{thm-restate}

\newenvironment{proof}[1][Proof]{\noindent\textbf{#1.} }{\ \rule{0.5em}{0.5em}}

%%%%Macros for variables defined below.

\onehalfspacing
\begin{document}

\title{A Theory of Stable Market Segmentations}
\author{Nima Haghpanah and Ron Siegel\thanks{Department of Economics, the Pennsylvania
State University, University Park, PA 16802 (e-mail: nuh47@psu.edu and
rus41@psu.edu).}}
\date{\today}
\maketitle

\begin{abstract}
%We consider a market served by a single seller. The seller may segment the market, and the surplus of each consumer in a segment depends on the price that the seller optimally charges for the segment. This price depends on the set of consumers in the segment.

%We consider a market that may be segmented served by a single seller.
We consider a monopolistic seller in a market that may be segmented. The surplus of each consumer in a segment depends on the price that the seller optimally charges, which depends on the set of consumers in the segment. We study which segmentations may result from the interaction among consumers and the seller. Instead of studying the interaction as a non-cooperative game, we take a reduced-form approach and introduce a notion of stability that any resulting segmentation must satisfy. A stable segmentation is one that, for any alternative segmentation, contains a segment of consumers that prefers the original segmentation to the alternative one. Our main result characterizes stable segmentations as efficient and saturated. A segmentation is saturated if no consumers can be shifted from a segment with a high price to a segment with a low price without the seller optimally increasing the low price. We use this characterization to constructively show that stable segmentations always exist. Even though stable segmentations are efficient, they need not maximize average consumer surplus, and segmentations that maximize average consumer surplus need not be stable. Finally, we relate our notion of stability to solution concepts from cooperative game theory and show that stable segmentations satisfy many of them.

%This gives rise to a novel cooperative game between consumers that determines market segmentation.  We study several solution concepts.  The most demanding solution concept, the core, requires that there is no objection to the segmentation by a coalition of consumers who benefit by forming a new segment. We show that the core is empty except for trivial cases.  We then introduce a new solution concept, stability.  A segmentation is stable if for each possible objection by a coalition, there is a counter-objection by a coalition in the original segmentation. We characterize stable segmentations as ones that are efficient and ``saturated,'' which means that the segments are maximal in some sense.  We use this characterization to constructively show that stable segmentations exist.  Even though stable segmentations are efficient, they need not maximize average consumer surplus, and segmentations that maximize average consumer surplus need not be stable. Our weakest solution concept, fragmentation-proofness, rules out objections by coalitions that include consumers from more than one segment. We show that a segmentation is fragmentation-proof if and only if it is efficient.
\end{abstract}

\section{Introduction}

Sellers in a variety of markets use consumer data to conduct market segmentation and price discrimination. Various factors, including consumers' data disclosure decisions and other aspects of consumer behavior, determine which data sellers can access and, ultimately, the resulting market segmentation and prices consumers face. Consumers' decisions and their interaction with the seller may vary across settings, leading to different market segmentations. We are interested in understanding which market segmentations and prices would arise in different market settings.

%ellers use consumer data from various sources to conduct market segmentation and price discrimination. Consumers' data disclosure decisions and other features of the market affect the data sellers can access, and ultimately the resulting market segmentation and prices consumers face. We are interested in understanding which market segmentations and prices would arise in different market settings. 

%Consumers' data disclosure decisions affect sellers' ability to segment the market and conduct price discrimination. Different disclosure decisions lead to different market segmentations, which lead to different consumer prices. 
%These profit-maximizing prices depend on the distribution of consumer values in each market segment and optimally vary across segments.
%, so different consumers may prefer different market segmentations.
%This raises the question of which market segmentations would arise in different situations.

We consider a market with a monopolistic seller of a single product and a continuum of consumers with unit demand, each characterized by their value for the product. We normalize the seller's cost of production to zero, and assume without loss of generality that consumer values are positive. Given a segmentation of the market, the seller sets the profit-maximizing price in each segment, which is the monopoly price given the distribution of consumer values in the segment.

If the seller has access to detailed consumer data and can unilaterally determine the segmentation, she will segment the market based on consumers' willingness to pay 
%into singleton consumers
and implement first-degree price discrimination. If instead consumers determine the market segmentation before they learn their value for the product, they will rank the possible segmentations by the expected consumer surplus and choose a segmentation with the highest expected consumer surplus. Such segmentations were identified by \citet{BBM15}. But if consumers know their values for the product at the outset, different consumers may prefer different market segmentations. Which market segmentation would arise in this case?

To address this questions, we study the interaction among consumers and the seller in reduced form by focusing on properties of the resulting segmentation. We develop a notion of stability that corresponds to a segmentation being immune to deviations to other segmentations. The idea is that once a segmentation arises, any segment of consumers can object to transitioning to another segmentation, which prevents that segmentation from being realized. 
More precisely, a segmentation is a partition of the consumers into segments, where each segment is a set, or \emph{coalition}, of consumers together with a price for the coalition that maximizes the seller's profit. A segmentation $S$ is \emph{stable} if for any segmentation $S'$ that does not give all consumers the same surplus $S$ does,
%for any non-equivalent segmentation $S'$, which does not give all consumers and the seller the same surplus as $S$ does, 
there is some coalition $C$ of consumers in $S$ that objects to $S'$, that is, all consumers in $C$ weakly prefer $S$ to $S'$ and some consumers in $C$ strictly prefer $S$ to $S'$.
%A segment objects to a new segmentation if no consumers in the segment are made better off and some consumers are in the segment are made worse off by the transition. A segmentation $S$ is stable if for any non-equivalent segmentation $S'$, in which some consumers face prices different from those they face in $S$, there is some segment of consumers in $S$ that objects to to $S'$.
This notion of stability captures a kind of ``coalitional individual rationality (IR):'' once a segment is formed, its members cannot be regrouped into a different segment or segments if they all oppose this change.
%Indeed, distinguishing between consumers in a given segment, in order to regroup them into other segments, requires additional information, beyond what was needed to form the original segment. Our coalitional IR says that obtaining this information requires the agreement of at least some consumers in the segment.\footnote{The seller can of course combine several segments without requiring more information. But this is equivalent to setting the same price in these segments, which the seller would already do if it is optimal.}

Our main result characterizes stable segmentations. The characterization shows that a segmentation is stable if and only if it is \emph{efficient} and \emph{saturated}. Efficiency means that every consumer buys the product, and saturation means that consumers in each segment are not willing to accept additional consumers from a segment with a higher price because doing so will increase the price in their own segment. We also show that stable segmentations are Pareto undominated, that is, there is no other segmentation that makes all consumers better off. To clarify the separate roles that efficiency and saturation play in segmentation stability, we also characterize efficient segmentation. We show that a segmentation is efficient if and only if it is \emph{fragmentation-proof}, which means that it is immune to objections by coalitions of consumers that belong to the same segment.

%We briefly consider a more permissive solution concept, fragmentation-proofness, that only requires a segmentation to be immune to objections by coalitions of consumers that belong to the same segment. We show that a segmentation is fragmentation-proof if and only if it is efficient.  This clarifies the different roles that efficiency and saturation play in segmentation stability.
   
We use our characterization of stable segmentations to show that they exist by constructing an efficient and saturated segmentation. This stable segmentation is the \emph{maximal equal-revenue segmentation}, identified by \citet{BBM15} as a segmentation that maximizes average consumer surplus among all segmentations. We then show that multiple stable segmentations may exist and that maximizing consumer surplus neither implies nor is implied by stability. 

Our results may be relevant to policy discussions regarding monopolies, price discrimination, and data sharing. Monopolies lead to inefficiencies, and these inefficiencies may be reduced with regulation or increased competition. Market segmentation arising from the monopolist's access to consumer data can also reduce inefficiency, but may harm consumers, as first-degree price discrimination demonstrates. Our results show that as long as consumers have enough control over their data to achieve ``coalitional IR,'' market segmentation leads to efficiency and a Pareto undominated outcome. This indicates that policies or information intermediaries that support coalitional IR while allowing the seller to use consumer data to price discriminate may offer an alternative or complimentary tool to addressing the inefficiencies associated with monopolistic markets.

Our model can also be described as a cooperative game with non-transferable utility (NTU). The set of players is the set of consumers. The set of feasible utility vectors associated with a measurable subset $C$ of players (coalition of consumers) consists of the profiles of payoffs of consumers in $C$ across all segmentations of $C$ (that is, when $C$ is taken to be the set of consumers).
%For each subset of players, that is, a coalition of consumers, each profit-maximizing price (given the distribution of values in the coalition) generates a vector of utilities for the players in the 
A utility vector corresponding to some segmentation is in the core of this game if no coalition of consumers objects to the segmentation, that is, no coalition of consumers can obtain a weakly higher utility for all its members, and a strictly higher utility for some of its members. We show that the core of the game is non-empty if and only if it is profit-maximizing for the seller to set the efficient (lowest) price in the unsegmented market. In this case, the utility vector corresponding to the unsegmented market is essentially the unique element in the core. Moreover, any stable segmentation generates the same utility vector. Thus, if the core is not empty, it coincides with our notion of stability. In general, however, the core is empty but stable segmentations always exist.

Our notion of stability refines several solution concepts for NTU games applied to our setting. We show that a stable segmentation together with its equivalent segmentations, which give the same payoffs to the consumers, form a \citet{MoV53} stable set, a \citet{Har74} farsighted stable set, a \citet{RaV15} farsighted stable set, and a \citet{RaV19} maximal farsighted stable set.\footnote{Our notion of ``coalitional IR'' is closely related to the notion of ``coalitional sovereignty'' in \citet{RaV15} applied to our setting.}
%A key step in showing this result is showing that our notion of stability is equivalent to the following less demanding notion of ``blocking stability.'' We say that a segmentation $S'$ \emph{blocks} segmentation $S$ if some segment in $S'$ objects to $S$. A segmentation $S$ is blocking stable if it blocks any segmentation $S'$ that blocks $S$.\footnote{Recall that stability also requires that $S$ block segmentations that do not block $S$.} This equivalence is useful because blocking stability is closely related to ``external stability'' (\citealp{MoV53}), a key property of stable and farsighted stable sets.
We point out that stable sets and farsighted stable sets do not always exist in NTU games. And when they do exist they may necessarily includes multiple, non-equivalent utility vectors or partitions of players. The results above show that in our market game, ``singleton'' (up to equivalent segmentations) stable and farsighted stable sets always exist. In particular, for any deviation from a stable segmentation $S$ to another segmentation $S'$, there exists a path of ``credible'' and ``maximal'' (in the sense of \citealp{RaV19}) segmentations that leads back from $S'$ to $S$. This provides another justification for our notion of stability.

The rest of the paper is organized as follows. \autoref{sec: related literature} describes the relationship of our work to the existing literature. \autoref{sec: model} describes the model. \autoref{sec: characterization} introduces our notion of stability and the main results. \autoref{sec: cooperative} casts our model as a cooperative game and relates our notion of stability to cooperative solution concepts. \autoref{sec: conclusion} concludes.

\subsection{Related literature}\label{sec: related literature}

%Nima: We should relate our work with the classical papers that study the connection between market outcomes and coalitional games.  Osbourne and Rubinstein have a very good set of such papers (pages 274-275).

\citet{PeV21} consider stability of centralized markets against deviations by coalitions of agents. They show that fragmentation of such markets is unavoidable, despite its efficiency costs, except in special circumstances.   They study a bilateral trade setting, whereas we study a setting with a population of consumers and a seller. Further, when centralized markets are fragmented, \citet{PeV21} do not predict what the resulting segmentation looks like, whereas a characterization of stable segmentations is a main focus of our paper. Another important difference is that in their setting a coalition chooses the trading mechanism, whereas in our setting each coalition of consumers faces a profit-maximizing price set by the seller. 

A recent literature on third-degree price discrimination studies consumer and producer surplus across all possible segmentations of a given market.  \citet{BBM15} identify the set of producer and consumer surplus pairs that result from all segmentations of a given market. Their results also identify segmentations that maximize average consumer surplus.  \citet{CDH20} study an extension in which only certain segmentations may be chosen. \citet{Glo18} study optimal disclosure by an informed agent in a bilateral
trade setting, and show that the optimal disclosure policy leads to socially
efficient trade, even though information is revealed only partially.  \citet{Ich18},
\citet{HiV18}, \citet{Bra17}, and \citet{HaS21} consider maximum average consumer surplus when a multi-product seller
offers different products in each market segment.  These papers can be seen as identifying segmentations that are chosen \emph{ex ante} by a consumer who does not know her type, because such a consumer chooses the segmentation that maximizes her expected payoff.  In contrast, in this paper we study market segmentation when consumers know their type.

The related papers that study disclosure decisions by agents who know their type model these interactions as non-cooperative games. \citet{ALV20} consider voluntary disclosure of data by a single consumer, and analyze the welfare implications of various disclosure policies in both a monopolistic and a competitive environment.   \citet{ShV15} study a disclosure setting in which the seller can commit to the mechanism that she will use after receiving information.  In our setting, the seller cannot commit and chooses a profit-maximizing price for each segment.  \citet{AMM19} and \citet{BBG20} also study the consequences of consumers' disclosure decisions on prices and other market outcomes.  The main difference between these papers and ours is that we model the interaction between consumers and the seller in reduced form that can also be thought of as a cooperative game.

Our work also relates to the literature that endogenizes privacy costs because the payoff of each consumer depends on the set of other consumers in her segment.  In this literature \citep{Tay04,CaP06,CTW12,CLP16,BoC20,ArB21} privacy is valuable because of dynamic considerations: what an agent reveals today may be shared and used by other agents in the future.  In our setting, privacy is valuable because it allows a consumer to buy the product more cheaply by pooling with other consumers.

\section{Model\label{sec: model}}
A monopolistic seller faces a unit mass of consumers uniformly distributed on the interval $[0,1]$.  Consumers have unit demand for the monopolist's product, whose production cost is normalized to zero.  The value of the product for consumer $c \in [0,1]$ is $v(c) \in V = \{v_1,\ldots,v_n\} \subseteq R_{>0}$, where $v$ is a measurable function and $v_i$ increases in $i$.  Let $\mu$ be the Borel measure on the unit interval.  The measure of consumers with value $v_i$ is $f(v_i) = \mu(\{c: c \in [0,1], v(c) = v_i\})$.  We assume without loss of generality that $f(v_i) > 0$ for every $i \in \{1,\ldots,n\}$.

A \emph{coalition} is a measurable subset $C \subseteq [0,1]$ of consumers.  Let $f^C(v_i) = \mu(\{c: c \in C, v(c) = v_i\})$ denote the measure of consumers with value $v_i$ in coalition $C$.  We say that consumers with value $v_i$ are in $C$ (or that $C$ contains consumers with value $v_i$) if $f^C(v_i)>0$. A price $p \in V$ is \emph{optimal for coalition} $C$ if it maximizes the revenue from selling the product to consumers in $C$, that is, for any other price $p' \in V$,
\begin{align*}
    p \sum_{i: v_i \geq p} f^C(v_i) \geq p' \sum_{i: v_i \geq p'} f^C(v_i).
\end{align*}
We restrict attention to prices in $V$ because for any other price there exists a price in $V$ with a weakly higher revenue.  

A \emph{segment} is a pair $(C,p)$, where $C$ is a coalition and $p$ is an optimal price for that coalition. A \emph{segmentation} $S$ is a finite set of segments $\{(C_j,p_j)\}_{j=1,\ldots,k}$
such that $C_1,\ldots,C_k$ partitions the set of consumers $[0,1]$. That is, a segmentation partitions the set of consumers into coalitions, and assigns an optimal price for each coalition.  A segmentation is trivial if it consists of a single segment $\{([0,1],p)\}$. There may be multiple trivial segmentations because multiple prices may be optimal for the set $[0,1]$ of all consumers.

Denote by $CS(c,p) = \max\{v(c)-p,0\}$ the surplus of consumer $c$ who is offered the product at price $p$.  If consumer $c$ belongs to segment $(C,p)$, then her surplus is $CS(c,p)$. Given a segmentation $S$, denote by $p_S(c)$ the price in the unique segment that includes consumer $c$.  Let  $CS(c,S) = CS(c, p_S(c))$ denote the surplus of consumer $c$ in segmentation $S$.

Consumers' preferences over segmentations may differ because the prices different consumers face vary across segmentations. %consumers face different prices in different segmentations.
Consumers' data disclosure and other decisions, along with their interaction with the seller, which we do not explicitly model, determine the resulting segmentation. We model this process in reduced form by specifying a stability property for the resulting segmentation.

%The rest of the paper develops different solution concepts that predict which segmentations would arise as the outcome of strategic choices by coalitions of consumers.

\section{Stable Segmentations}
We develop a notion of stable segmentations and show that such segmentations always exist. To this end we first formalize what it means for a segment to object to a segmentation.

\begin{definition}[Objection]
A segment $({C},p)$ objects to a segmentation $S$ if $CS(c,p) \geq CS(c,S)$ for all consumers $c$ in ${C}$, with a strict inequality for a positive measure of consumers $c$ in ${C}$. 
\end{definition}

In other words, a segment $({C},p)$ objects to a segmentation $S$ if all the consumers in $C$ are weakly worse off in $S$ and some consumers in $C$ are strictly worse off. In particular, $({C},p)$ is not in $S$.  Notice that the definition would be vacuous if we required the inequality to be strict for all (or almost all) consumers in $C$, because the optimality of $p$ for $C$ requires that the surplus of the consumers with the lowest value in $C$ is zero.
%So with a strict inequality, there would be no objection to any segmentation.    

Next we formalize the notion of a blocking segmentation.

\begin{definition}[Blocking]
A segmentation $S'$ blocks a segmentation $S$ if there exists a segment $(C',M')$ in $S'$ that objects to $S$.
\end{definition}

We are now ready to introduce our notion of stability. For the definition, we say that segmentations $S$ and $S'$ are \emph{surplus-equivalent} if almost all consumers are indifferent between the two segmentations, that is, $CS(c,S)=CS(c,S')$ for almost every consumer $c$ in $[0,1]$.

\begin{definition}[Stability]
    A segmentation is stable if it blocks any segmentation that is not surplus-equivalent to it.
\end{definition}

Stability captures a kind of ``coalitional individual rationality (IR)'' in that no coalition can be forced to regroup into one or more different segments if all its members oppose this change. A segmentation is stable if no other segmentation can be reached without violating coalitional IR. Like other notions of stability, the definition does not specify the details of the interaction among the consumers and the seller and how a stable segmentation is reached. Instead, it can be thought of as ruling out deviations from any candidate segmentation that is the outcome of the unmodeled process. An alternative, equivalent definition is that a segmentation is stable if it counter-blocks any blocking segmentation. That is, no blocking segmentation can be reached without violating coalitional IR.\footnote{The equivalence of the two definitions follows from the characterization of stable segmentations in the next subsection because, using the terminology introduced there, a proof similar to that of \autoref{conjecture: stable with a single product} shows that the induced canonical segmentation of a segmentation that counter-blocks any blocking segmentation is efficient and saturated.}

The next sections show that stable segmentations exist and characterize them.  We start with the characterization because we use it to prove existence.

%It is easy to show that stable segmentations are segmentations such that for any objecting segment and any segmentation that includes this segment, there are some consumers in the objecting segment who are indifferent between the segmentations and all of whose partners in the original segmentation prefer the original segmentation. In other words, the indifferent consumers (who do not benefit from the objection) harm all their partners by objecting.

%The segmentation $S^*$ defined in \autoref{example 2}, that puts half of consumers with value $2$ with all consumers with value $1$ in one segment, and the remaining consumers with value $2$ into another segment, is stable.  This is because given that $f(v_1) = \frac{1}{3}$, in any segmentation the measure of consumers with value $2$ who have surplus $v_2 - v_1 = 1$ is at most $\frac{1}{3}$.  So if the surplus of value $v_2$ consumers in the second segment increases from $0$ to $1$, then some $v_2$ consumers in the first segment must be harmed, and no consumer in the first segment benefits. Therefore, the first segment would object.

\subsection{Characterization of Stable Segmentation}\label{sec: characterization}

We start by defining the notion of a \emph{canonical} segmentation. A segmentation is canonical if no two segments have the same price.  For each segmentation $S$, all surplus-equivalent segmentations are surplus-equivalent to a unique canonical segmentation, which we call the \emph{induced canonical segmentation} of $S$. The induced canonical segmentation is obtained by merging all segments that have the same price into a single segment with that price.

%We start by formulating a \emph{canonical} representation of a segmentation so that surplus-equivalent segmentations have the same canonical representation.  For this, we say that a segmentation is canonical if no two segments have the same price.  Each segmentation $S$ is surplus-equivalent to a unique canonical segmentation, which we call the \emph{induced canonical segmentation} of $S$.  The induced canonical segmentation is obtained by merging all segments that have the same price into a single segment with that price.

%If the induced canonical segmentation $S'$ of $S$ is stable, then so is $S$.  This is because $S$ and $S'$ are surplus-equivalent, so if a segmentation $S''$ blocks either $S$ or $S'$, then it also blocks the other one.  Further, if a merged segment in $S'$ blocks $S''$, then at least one of the original segments in $S$ also blocks $S''$.  However, it is not a priori clear that if $S$ is stable, then so is its induced canonical segmentation.  This is because it may be that $S$ contains two segments $C,C'$ with the same optimal price such that $C$ objects to $S'$, but the segment $C \cup C'$ does not object to $S'$.  We show below that this is not the case.  Therefore, we can obtain a characterization of all stable segmentations by characterizing canonical stable segmentations.

Our characterization says that a segmentation is stable if and only if its induced canonical segmentation satisfies two properties: efficiency and saturation. A segmentation is \emph{efficient} if all consumers buy the product, that is, for any segment $(C,p)$ in the segmentation, the price $p$ is equal to the lowest value $\underline{v}(C):=\min\{v: f^C(v)>0\}$ in $C$. A segmentation is \emph{saturated} if for any segment $(C,p)$ in the segmentation, whenever we add consumers to coalition $C$ from a segment with a price strictly higher than $p$, price $p$ is sub-optimally low for this larger coalition.  That is, for any two segments $(C,p)$ and $(C',p')$ in the segmentation with $p<p'$ and any positive-measure set $C'' \subseteq C'$ of consumers, any optimal price for coalition $C \cup C''$ is strictly higher than $p$.
%$C''$ that $C$ and some positive measure of consumers in $C'$ is strictly larger than $p$.

The following lemma shows that saturation can be expressed more succinctly by looking at the set of prices that are optimal for different segments.

\begin{lemma}\label{lem: succinct saturation}
A segmentation is saturated if and only if for any two segments $(C,p)$ and $(C',p')$ in the segmentation with $p<p'$, there exists a price $\hat{p}$ that is optimal for $C$ such that $p< \hat{p} \leq \underline{v}(C')$.
\end{lemma}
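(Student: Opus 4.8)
The plan is to prove the equivalence one segment-pair at a time: fix two segments $(C,p)$ and $(C',p')$ with $p<p'$, write $\underline v(C')$ for the lowest value in $C'$, and abbreviate the revenue of a coalition $D$ at price $q$ as $R_D(q)=q\sum_{i:\,v_i\ge q}f^D(v_i)$, with $r=R_C(p)=\max_q R_C(q)$. The fact I would isolate at the outset is the effect of adding a positive-measure set $C''\subseteq C'$ to $C$: since every value in $C''$ is at least $\underline v(C')$, at any price $q\le\underline v(C')$ all of $C''$ buys, so $R_{C\cup C''}(q)=R_C(q)+q\,\mu(C'')$, while at prices $q>\underline v(C')$ only part of $C''$ buys. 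Both the saturation condition and the succinct condition for this pair are statements about where the maximizers of these perturbed revenue functions lie, so the whole proof reduces to tracking maximizers.

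For the direction ``succinct $\Rightarrow$ saturated,'' suppose there is $\hat p$ optimal for $C$ with $p<\hat p\le\underline v(C')$, and take any positive-measure $C''\subseteq C'$ with $m=\mu(C'')>0$. For every price $q\le p$ I have $q<\hat p\le\underline v(C')$, so all of $C''$ buys at both $q$ and $\hat p$, giving $R_{C\cup C''}(q)=R_C(q)+qm\le r+qm\le r+pm<r+\hat p\,m=R_{C\cup C''}(\hat p)$. Hence no price $q\le p$ maximizes $R_{C\cup C''}$, so every optimal price of $C\cup C''$ exceeds $p$; as $C''$ was arbitrary, the pair is saturated.

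For the converse ``saturated $\Rightarrow$ succinct,'' I would probe saturation with the single most informative coalition: let $C''$ be a small mass $m>0$ of consumers in $C'$ of value exactly $\underline v(C')$, which exists because $f^{C'}(\underline v(C'))>0$. Adding this mass raises $R_C$ by $qm$ at prices $q\le\underline v(C')$ and leaves it unchanged at higher prices. Since $V$ is finite, a gap estimate (taking $m$ below $\text{gap}/v_n$, where the gap separates $r$ from the best suboptimal revenue of $C$) forces $\mathrm{Opt}(C\cup C'')\subseteq\mathrm{Opt}(C)$. Within $\mathrm{Opt}(C)$ the perturbed revenue equals $r+qm$ on prices $\le\underline v(C')$ and $r$ on prices $>\underline v(C')$. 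I then split into cases. If some optimal price of $C$ is $\le\underline v(C')$, then the unique maximizer of $R_{C\cup C''}$ is the largest such price $\hat p$, and saturation forces $\hat p>p$, which is exactly $p<\hat p\le\underline v(C')$. If instead every optimal price of $C$ exceeds $\underline v(C')$, the perturbation boosts none of them, so $\mathrm{Opt}(C\cup C'')=\mathrm{Opt}(C)\ni p$; this leaves $p$ optimal for $C\cup C''$, contradicting saturation. Hence the second case is impossible and the succinct condition holds.

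The revenue bookkeeping and the finite-grid gap estimate are routine. I expect the main obstacle to be the converse's case analysis: one must simultaneously justify the small-mass reduction to maximizers of $C$ and recognize that the configuration ``all optimal prices of $C$ exceed $\underline v(C')$'' is precisely what saturation rules out, since there a little low-value mass leaves the original price $p$ optimal, the one outcome saturation forbids. Isolating the largest optimal price $\le\underline v(C')$ as the witness $\hat p$ is the key move that makes the succinct formulation drop out.
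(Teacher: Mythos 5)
Your proof is correct and takes essentially the same approach as the paper's: the forward direction rests on the same observation that adding consumers whose values are at least $\underline{v}(C')$ raises the revenue of $\hat{p}$ by more than that of any price $q \leq p$, and the converse uses the same witness perturbation (a small mass of value-$\underline{v}(C')$ consumers from $C'$), with the only difference being that you argue directly from saturation via a case split and an explicit gap estimate where the paper argues by contraposition. Your extra bookkeeping (prices strictly below $p$, the finite-grid gap, uniqueness of the post-perturbation maximizer) fills in details the paper leaves implicit, but the underlying idea is identical.
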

\begin{proof}If such a $\hat{p}$ exists, then by adding consumers from $C'$ to $C$, all of whose values are at least $\underline{v}(C')$, the revenue of price $\hat{p}$ increases more than the revenue of price $p$.  And because both $p$ and $\hat{p}$ are optimal for $C$, $p$ is no longer is optimal when we add these consumers. Conversely, if no such $\hat{p}$ exists, then $p$ is the highest optimal price for $C$ that does not exceed $\underline{v}(C')$, so if we add a small measure of consumers with value $\underline{v}(C')$ from $C'$ to $C$, price $p$ remains optimal for $C$.
\end{proof}

%A difficulty in identifying sufficient conditions for stability is that we should not only consider simple ways to form an objection, such as ones discussed above where some consumers from a higher segment join a lower segment, but also more complicated ones.  For example, it is possible that each segment in a blocking segmentation contains some consumers from each of the segments in the original segmentation, and vice versa.  A simplifying observation is that if a canonical segmentation is unstable, then we can focus on certain blocking segmentations in which some consumers from some segment $C_j$ join the preceding segments $C_1,\ldots,C_{j-1}$, but those segments are otherwise intact.  

%\begin{proposition}\label{conjecture: stable with a single product}
%        A segmentation is stable if and only if its induced canonical segmentation is stable.  The following three statements are equivalent for any canonical segmentation $S$.
%        \begin{enumerate}
%            \item $S$ is stable.
%            \item $S$ is efficient and saturated.
%            \item $S$ is efficient and for any segment $C_j$ in $S$, there exists $p \in OPT(C_j)$ such that $\underline{v}(C_j) < p \leq \underline{v}(C_{j+1})$.
%        \end{enumerate} 
%\end{proposition}

To prove the characterization of stable segmentations, we first relate stability to Pareto dominance. A segmentation $S$ \emph{Pareto dominates} another segmentation $S'$ if $CS(c,S) \geq CS(c,S')$ for all consumers $c$ in $[0,1]$, with a strict inequality for a positive measure of consumers. A segmentation $S$ is \emph{Pareto undominated} if there exists no segmentation $S'$ that Pareto dominates $S$.

\begin{lemma}\label{Prop: dominated then unstable}
If a segmentation is Pareto dominated, then it is not stable.
\end{lemma}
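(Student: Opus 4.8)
The plan is to prove the contrapositive's natural form directly: assuming $S$ is Pareto dominated, I will exhibit a segmentation that blocks $S$ but is not surplus-equivalent to it, thereby violating stability. Let $S'$ be the segmentation that Pareto dominates $S$, so $CS(c,S') \geq CS(c,S)$ for all $c$, with strict inequality on a positive-measure set. Because $S'$ gives strictly higher surplus to a positive measure of consumers, it cannot be surplus-equivalent to $S$, so it suffices to show that $S'$ blocks $S$, i.e., that $S'$ contains a segment $(C',M')$ that objects to $S$.

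The first step is to locate a single segment of $S'$ that does all the work. Since the segments of $S'$ partition $[0,1]$ and a positive measure of consumers strictly prefer $S'$ to $S$, at least one segment $(C',M')$ of $S'$ must contain a positive measure of these strictly-better-off consumers. Fix such a segment. For this segment, I would verify the two conditions in the definition of \emph{objection}: that $CS(c,M') \geq CS(c,S)$ for all $c \in C'$, and that $CS(c,M') > CS(c,S)$ for a positive measure of $c \in C'$. The weak inequality holds for every $c \in C'$ because $CS(c,M') = CS(c,S') \geq CS(c,S)$ by Pareto dominance (recall $M'$ is exactly the price consumer $c$ faces in $S'$). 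The strict inequality holds on the positive-measure subset of $C'$ we selected, again because $CS(c,M') = CS(c,S')$ there. Hence $(C',M')$ objects to $S$, so $S'$ blocks $S$.

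I would then close the argument by noting that $S$ therefore fails to block the segmentation $S'$ (an objecting segment of $S'$ against $S$ is incompatible with a segment of $S$ objecting to $S'$, since objection requires the objecting consumers to be weakly better off, not worse off), and since $S'$ is not surplus-equivalent to $S$, the stability requirement that $S$ block every non-surplus-equivalent segmentation fails. Thus $S$ is not stable.

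I do not anticipate a serious obstacle here; the result is essentially an unpacking of definitions, and the only point requiring care is the step where I pass from ``a positive measure of consumers in $[0,1]$ strictly prefer $S'$'' to ``some \emph{single} segment of $S'$ contains a positive measure of them.'' This follows because $S'$ has finitely many segments whose coalitions partition $[0,1]$, so countable (indeed finite) additivity of the measure $\mu$ forces at least one segment to carry positive measure of the strictly-improved set. The rest is direct substitution using the identity $CS(c,S') = CS(c,M')$ for $c$ in the segment $(C',M')$.
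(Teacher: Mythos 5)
There is a genuine gap, and it sits exactly at the step that matters. The paper's definition of stability requires that $S$ itself \emph{blocks} every segmentation not surplus-equivalent to it; so to show $S$ is not stable you must exhibit a non-surplus-equivalent $S'$ that $S$ \emph{fails to block}. Your first two paragraphs instead establish that $S'$ blocks $S$. That claim is true (and your measure-theoretic point about locating a single segment of $S'$ is fine), but it is not what the definition asks for: whether $S'$ blocks $S$ is irrelevant under the paper's primary definition (it would matter only for the alternative ``counter-blocking'' formulation, whose equivalence the paper derives only later, from the main characterization theorem, and which this lemma cannot presuppose). The step you actually need---that no segment of $S$ objects to $S'$---you justify by the parenthetical claim that an objecting segment of $S'$ against $S$ is ``incompatible'' with a segment of $S$ objecting to $S'$. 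That principle is false in general: objections are raised by \emph{different} coalitions, so two segmentations can perfectly well block each other---one segment whose consumers do better in $S$ and another segment whose consumers do better in $S'$ can coexist. Blocking is not antisymmetric, and deriving ``$S$ fails to block $S'$'' from ``$S'$ blocks $S$'' is a non-sequitur.

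The correct argument does not route through blocking of $S$ at all; it is the paper's two-line proof. Pareto dominance gives $CS(c,S') \geq CS(c,S)$ for \emph{every} consumer $c$. Hence for any segment $(C,p)$ of $S$ and any $c \in C$ we have $CS(c,p) = CS(c,S) \leq CS(c,S')$, so there is no positive measure of consumers in $C$ with $CS(c,p) > CS(c,S')$, and $(C,p)$ cannot object to $S'$. Thus $S$ does not block $S'$, which together with your (correct) observation that $S'$ is not surplus-equivalent to $S$ shows that $S$ is not stable. With this substitution the lemma is proved, and everything in your first two paragraphs can be deleted.
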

\begin{proof}
If $S$ is Pareto dominated by $S'$, then no segment in $S$ objects to $S'$, and $S'$ is not surplus-equivalent to $S$.  Therefore, $S$ is not stable.
\end{proof}

We now state and prove our main result.

\begin{theorem}\label{conjecture: stable with a single product}
        A segmentation is stable if and only if its induced canonical segmentation is efficient and saturated.
\end{theorem}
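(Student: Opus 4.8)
The plan is to reduce to the induced canonical segmentation and then prove the two directions separately, the reverse direction (efficient and saturated $\Rightarrow$ stable) being the substantive one. Since objections depend only on the surplus profile $c\mapsto CS(c,S)$, and since a merged same-price segment that objects to some $S'$ forces one of its pieces (the one carrying a strictly-worse consumer) to object, stability of the induced canonical segmentation implies stability of every surplus-equivalent segmentation; this transfers the sufficiency direction from the canonical case to the general one, while necessity I would argue directly on $S$. The simplification I would lean on throughout is that when $S$ is \emph{efficient} every consumer buys, so $CS(c,S)=v(c)-p_S(c)$; writing $p'(c)$ for the price $c$ faces in $S'$, a one-line computation shows $c$ is strictly better off in $S'$ iff $p'(c)<p_S(c)$ and strictly worse off iff $p'(c)>p_S(c)$ and $v(c)>p_S(c)$. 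Hence a segment $(C,p)$ of $S$ objects to $S'$ iff every member faces a price at least $p$ in $S'$ and a positive measure of members with value above $p$ face a price strictly above $p$.

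For necessity I would use the contrapositive. If $S$ is inefficient, I claim it is Pareto dominated, so \autoref{Prop: dominated then unstable} gives instability: in a segment $(C,p)$ with $p>\underline{v}(C)$ the consumers valued below $p$ earn zero, so splitting them off together with a small mass of buyers into a new segment whose optimal price lies below $p$ strictly helps those buyers and harms no one (a short argument shows the remainder keeps $p$ optimal). If $S$ is efficient but not saturated, \autoref{lem: succinct saturation} produces segments $(C,p)$, $(C',p')$ with $p<p'$ such that $p$ stays optimal after moving a small mass of value-$\underline{v}(C')$ consumers from a single segment priced $p'$ into $C$; repricing that segment's remainder and leaving all others untouched yields $S'$. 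The moved consumers had surplus zero and now pay $p<p'$, so they are strictly better off and $S'$ is not surplus-equivalent; the only possibly-harmed consumers lie in the one source segment, which therefore cannot object because it also contains the strictly-improved movers, so no segment of $S$ objects.

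For sufficiency, let $S$ be efficient and saturated with segments ordered $v_1=p_1<\cdots<p_k$, and suppose for contradiction that $S'$ is not surplus-equivalent yet no segment objects. I would show by induction on $j$ that each $C_j$ is \emph{surplus-matched} ($CS(c,S')=CS(c,S)$ a.e.\ on $C_j$), which summed over $j$ contradicts non-equivalence. For $C_1$ the requirement ``all members face a price $\ge p_1$'' is automatic because $p_1=v_1$ is the lowest value, so non-objection forces every above-$v_1$-value member to face exactly $v_1$. Assuming $C_1,\dots,C_{j-1}$ matched, the entire difficulty is to rule out a consumer of $C_j$ facing a price $q<p_j$ in $S'$; once that is excluded, non-objection of $C_j$ forces all its above-$p_j$-value members to face exactly $p_j$, completing the induction.

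The main obstacle is exactly this exclusion. Suppose $c\in C_j$ lies in an $S'$-segment $B$ with optimal price $q<p_j$. Optimality forces $B$ to contain a positive mass of value-$q$ consumers; these have value below $p_j$, hence sit in $S$ in a matched lower segment, and matching pins their $S'$-price $q$ to that segment's price, so $q=p_m$ for some $m<j$. Let $P_m$ be the union of all $S'$-segments priced $p_m$; as $p_m$ is optimal for each it is optimal for $P_m$. But matching of $C_m$ places every above-$p_m$-value consumer of $C_m$ into $P_m$, and $P_m$ also contains the positive mass $C''\subseteq C_j$ of value-$\ge p_j$ consumers facing $p_m$. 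Taking the price $\hat p\in(p_m,p_j]$ that \autoref{lem: succinct saturation} certifies optimal for $C_m$, the equal-revenue identity between $p_m$ and $\hat p$ on $C_m$, combined with the fact that $P_m$ carries no more value-in-$[p_m,\hat p)$ mass than $C_m$ but strictly more value-$\ge\hat p$ mass (the surplus being $C''$), forces $\hat p$ to strictly beat $p_m$ on $P_m$, contradicting optimality of $p_m$. Making this accounting precise — tracking the value-$p_m$ boundary consumers and verifying the monotonicity that removing low-value mass while adding high-value mass renders the low price suboptimal — is the one genuinely delicate computation; everything else is bookkeeping on top of the price reformulation above.
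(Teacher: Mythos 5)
Your sufficiency argument and the efficiency half of necessity are sound. Indeed, the sufficiency route you take---the surplus-matching induction, aggregating all $S'$-segments priced $p_m$ into $P_m$ (optimality of $p_m$ survives the union since revenues add), and the revenue comparison between $p_m$ and the saturation price $\hat p$, which does close as you sketch---is correct and genuinely different from the paper's proof, which instead works with the canonical form of the deviation and proves coalition-by-coalition equality using objections by segments of the \emph{original} segmentation. Your transfer lemma (canonical stable $\Rightarrow$ every surplus-equivalent segmentation stable, because a merged objecting segment forces one of its pieces to object) is also valid and handles the non-canonical case cleanly in that direction. The efficiency half of necessity is fine as well, because Pareto dominance depends only on almost-everywhere surpluses and therefore transfers from the canonical segmentation to the original one, after which \autoref{Prop: dominated then unstable} applies.

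The gap is in the saturation half of necessity. The theorem requires refuting stability of the \emph{original}, possibly non-canonical, segmentation $T$, not of its canonical form $S$; and instability does \emph{not} transfer from $S$ to its refinements, since a refinement's smaller segments can object to deviations that $S$'s merged segments cannot (this is exactly the phenomenon the paper highlights before its Example 6). Your construction moves a small mass of value-$p'$ consumers out of the single $S$-segment $(C',p')$ and argues that the only possibly harmed consumers sit in that one source segment, which cannot object because it contains the strictly improved movers. But if $T$ splits $C'$ into several segments priced $p'$, say $(D_1,p')$ and $(D_2,p')$, and your movers lie only in $D_1$, then $(D_2,p')$ contains harmed consumers and no movers, and it objects whenever the remainder $C'\setminus C''$ is repriced strictly above $p'$---which happens precisely when some price above $p'$ was tied-optimal for $C'$. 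Concretely: values $1,2,4$ with masses $0.5,0.25,0.25$; let $C$ contain all value-$1$ consumers and let $D_1,D_2$ each contain value-$2$ and value-$4$ mass $0.125$, so prices $2$ and $4$ tie on each $D_i$. The canonical segmentation $\{(C,1),(C',2)\}$ is efficient and unsaturated, so $T=\{(C,1),(D_1,2),(D_2,2)\}$ must be shown unstable; but if your movers come from $D_1$, the remainder is repriced to $4$ and $(D_2,2)$ objects to your $\bar S$, so $T$ blocks $\bar S$ and your witness fails. The missing idea is the one the paper builds into its choice of $C''$: draw a positive measure of value-$p'$ consumers from \emph{every} segment of $T$ priced $p'$, so that no such segment can object. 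Without this, your proof establishes only that the canonical segmentation is unstable, which is strictly weaker than the theorem's claim.
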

\begin{proof}
To see the necessity of efficiency, consider a segmentation $S'$ with an induced canonical segmentation $S$.  Suppose that $S$ is inefficient, so there is a segment $(C,p)$ in $S$ such that $p > \underline{v}(C)$.    Consider a coalition $\bar{C} \subseteq C$ that consists of the consumers in $C$ with values strictly lower than $p$ and a positive measure of the highest value consumers in $C$ that is small enough that any optimal price for $\bar{C}$ is strictly lower than $p$. Denote by $p'<p$ an optimal price for $\bar{C}$ so $(\bar{C},p')$ is a segment.  Observe that $p$ remains optimal for $C \backslash \bar{C}$. Indeed,  removing from $C$ consumers with values strictly lower than $p$, who do not purchase the product, does not change the revenue from $p$; and removing from $C$ some consumers with the highest value in $C$ can only lower the optimal price, but $p$ is already the lowest value of consumers in $C$ after removing the consumers with values lower than $p$, so $p$ remains optimal. 
%To see this, first notice that removing the consumers who are excluded in $C$ does not change the optimal price.  In addition, if we remove consumers of the highest value from a coalition, then the optimal price may only go down, but because $p$ is already the lowest possible value after removing all consumers with value lower than $p$, then $p$ remains optimal.
Now consider a segmentation $\bar{S}$ obtained from segmentation $S$ by replacing segment $(C,p)$ with the two segments $(C \backslash \bar{C},p)$ and $(\bar{C},p')$.  The consumers in $\bar{C}$ with the highest value in $C$ have a strictly higher surplus in $\bar{S}$ than in $S$, and all the other consumers have a weakly higher surplus in $\bar{S}$ than in $S$. Thus, $\bar{S}$ Pareto dominates $S$ and therefore $S'$. By \autoref{Prop: dominated then unstable}, $S'$ is not stable.

To see the necessity of saturation, consider a segmentation $S'$ with an induced canonical segmentation $S$.  Suppose that $S$ is not saturated.  If $S$ is inefficient, then the argument above implies that $S'$ is not stable.  Suppose that $S$ is efficient, which together with non-saturation implies that there are two segments $(C,\underline{v}(C))$ and $(C',\underline{v}(C'))$ in $S$ with $\underline{v}(C)<\underline{v}(C')$ such that no $\hat{p}$ with $\underline{v}(C)< \hat{p} \leq \underline{v}(C')$ is optimal for $C$.  In particular, if we add a set $C''$ of consumers of value $\underline{v}(C')$ to $C$, price $\underline{v}(C)$ remains optimal provided that the measure of $C''$ is small. Let $C''$ be such a set that contains a positive measure of consumers with value $\underline{v}(C')$ from \emph{every} segment in $S'$ in which the price is $\underline{v}(C')$ (recall that $S'$ need not be canonical). Notice that $C''\subseteq C'$ because $S$ is the induced canonical segmentation of $S'$. Let $\bar{C}=C\cup C''$ and consider a segmentation $\bar{S}$ that is obtained from segmentation $S$ by replacing $(C,\underline{v}(C))$ and $(C',\underline{v}(C'))$ with $(\bar{C},\underline{v}(C))$ and $(C'\backslash \bar{C},p)$, where $p$ is any optimal price for $C'\backslash \bar{C}$.  Segmentations $S'$ and $\bar{S}$ are not surplus-equivalent because consumers in $C''$ have a surplus of zero in $S'$ and a positive surplus in $\bar{S}$.

We now argue that $S'$ does not block $\bar{S}$.  The surplus of consumers from any segment in $S'$ with a price different from $\underline{v}(C)$ nor $\underline{v}(C')$ does not change in $\bar{S}$, so these segments do not object to $\bar{S}$. Segments in $S'$ with price $\underline{v}(C)$ do not object to $\bar{S}$ because the consumers in these segments are in segment $(\bar{C},\underline{v}(C))$ in $\bar{S}$ and therefore face price $\underline{v}(C)$ in both segmentations. Finally, by construction of $C''$, for any segment $(C''',\underline{v}(C'))$ in $S'$, some of the value $\underline{v}(C')$ consumers, whose surplus is zero in $S'$, are in segment $(\bar{C},\underline{v}(C))$ and obtain a strictly positive surplus, so $(C''',\underline{v}(C'))$ does not object to $\bar{S}$.  We conclude that $S'$ does not block $\bar{S}$ so $S'$ is not stable.

We now turn to sufficiency.  Consider a segmentation $S'$ with an induced canonical segmentation $S$ that is efficient and saturated.  Let $\bar{S}$ be a segmentation that is not blocked by $S'$.  We will show that $\bar{S}$ is surplus-equivalent to $S'$. Since the canonical representation of $\bar{S}$ is also not blocked by $S'$ and is surplus equivalent to $S'$ if and only if $\bar{S}$ is surplus equivalent to $S'$, %Since the canonical representation of $\bar{S}$ is also not blocked by $S'$,
we suppose without loss of generality that $\bar{S}$ is canonical.  Write the two canonical segmentations as $S = \{(C_1,v_1),\ldots,(C_n,v_n)\}$ and $\bar{S} = \{(\bar{C}_1,v_1),\ldots,(\bar{C}_n,v_n\}$, where for each $i$ either $C_i$ is  empty or $v_i=\underline{v}(C_i)$ (because $S$ is efficient), and each $\bar{C}_i$ may be empty. We will show by induction that $C_i = \bar{C}_i$ for all $i$, which will prove that $\bar{S}$ is surplus-equivalent to $S'$.  (For the rest of this proof, $C_i = \bar{C}_i$ is in the ``almost all" sense, that is the measure of consumers in $C_i$ but not in $\bar{C}_i$ is zero, and the measure of consumers in $\bar{C}_i$ but not in $C_i$ is zero).

Suppose that $C_j = \bar{C}_j$ for all $j < i$ (the basis of the induction is $i=1$). We show that $C_i = \bar{C}_i$.  If $i=n$, then we are done because $S$ and $\bar{S}$ partition the same set $[0,1]$ of consumers.  Suppose that $i<n$. Since $C_j = \bar{C}_j$ for all $j < i$, a consumer faces a price $p \geq v_i$ in $S$ if and only if she faces a price $p' \geq v_i$ in $\bar{S}$. 
%a consumer is in a segment of $S$ with price $p \geq v_i$ if and only if that consumer is in such a segment in $\bar{S}$.
In particular, consumers in $C_i$ face a price of at least $v_i$ in $\bar{S}$.  Consumers in $C_i$ with values higher than $v_i$ must be in $\bar{C}_i$, otherwise these consumers face a price strictly higher than $v_i$ in $\bar{S}$,
%so their surplus in $\bar{S}$ is lower than their surplus is $S$. Consequently, 
so any segment in $S'$ that contains some of these consumers objects to $\bar{S}$: the consumers in this segment are in $C_i$ and face  price $v_i$ in $S'$ (because $S$ is the canonical representation of $S'$), and in $\bar{S}$ the consumers in this segment face prices no lower than $v_i$ (by the claim at the beginning of the paragraph).  So $C_i$ and $\bar{C}_i$ are identical, except that $\bar{C}_i$ may not contain some consumers of value $v_i$ from $C_i$ and may contain some consumers from coalitions $C_{i+1},\ldots,C_n$, all of whom have value strictly higher than $v_i$ (because $S$ is efficient). But, as we now argue, if $C_i$ and $\bar{C}_i$ are not identical, then the fact that $S$ is saturated contradicts the fact that $v_i$ is optimal for $\bar{S}$. To see this, suppose first that $\bar{C}_i$ does not contain some consumers of value $v_i$ from $C_i$. Since $S$ is saturated and $i<n$, by the succinct representation of saturation in \autoref{lem: succinct saturation} some price $p>v_i$ is optimal for $C_i$ and $p$ is smaller than the value of all consumers in coalitions $C_{i+1},\ldots,C_n$. Removing from $C_i$ some consumers with value $v_i$ reduces the revenue of price $v_i$ but not of price $p$, which makes price $v_i$ sub-optimal. Then, if needed, adding to $C_i$ consumers with values strictly higher than $v_i$ to obtain $\bar{C}_i$ makes price $v_i$ even worse (weakly) relative to price $p$, so $v_i$ is not optimal for $\bar{C}_i$. Now suppose that $\bar{C}_i$ differs from $C_i$ only because $\bar{C}_i$ contains some consumers from coalitions $C_{i+1},\ldots,C_n$, all of whom have value strictly higher than $v_i$. Adding these consumers to $C_i$ makes price $v_i$ sub-optimal because $S$ is saturated, so $(\bar{C}_i,v_i)$ is not a segment, a contradiction.
\end{proof}

We underscore that to verify the stability of a segmentation, efficiency and saturation must be checked for its \emph{canonical} representation.  The following example describes a non-canonical segmentation that is efficient and saturated but not stable.

\begin{example}\label{example 2}
        There are four values, $1$ to $4$, with measures $0.5,0.25,0.125,0.125$, respectively, as shown in \autoref{fig: example 2}.
\begin{figure}
    \centering
    		\begin{tikzpicture}[scale=4.5, ultra thick]
		\draw (0,0) node[left]{Consumers} -- (1,0);
		\draw (0,.05) -- (0,-0.05) node[below]{$0$};
		\draw (1,.05) -- (1,-0.05) node[below]{$1$};
		\draw (0,.25) node[left]{Values};
		\draw[loosely dotted, thick] (.5,0) node[below, yshift=-3]{$\tfrac{1}{2}$} -- (.5,.25);
		\draw[loosely dotted, thick] (.75,0) node[below, yshift=-3]{$\tfrac{3}{4}$} -- (.75,.25);
		\draw[loosely dotted, thick] (.875,0) node[below, yshift=-3]{$\tfrac{7}{8}$} -- (.875,.25);
		\draw (.2,0.25) node{$1$};
		\draw (.625,0.25) node{$2$};
		\draw (.8125,0.25) node{$3$};
		\draw (.9375,0.25) node{$4$};
		
		\draw (0,-.3) node[left]{$C_1$};
		\draw (0.25,-.3) -- (.75,-.3);
		\draw (.25,-.25) -- (.25,-0.35) node[below]{$\tfrac{1}{4}$};
		\draw (.74,-.255) arc (40:-40:0.07) node[below]{$\tfrac{3}{4}$};		
		
		%\draw (.75,-.25) -- (.75,-0.35) node[below]{$\tfrac{3}{4}$};
		%\draw[blue] (.6,-.25) -- (.6,-0.35) node[below]{$\delta$};		
		
		\draw (0,-.6) node[left]{ $C_2$};
		\draw (.75,-.6) -- (.875,-.6);
		%\draw (.875,-.55) -- (.875,-.65) node[below]{$\tfrac{7}{8}$};
		\draw (.865,-.555) arc (40:-40:0.07) node[below]{$\tfrac{7}{8}$};
		\draw(.75,-.55) -- (.75,-.65) node[below]{$\tfrac{3}{4}$};
		
		\draw (0,-.6) -- (.25,-.6);
		%\draw (.25,-.55) -- (.25,-.65) node[below]{$\tfrac{1}{4}$};
		\draw (.24,-.555) arc (40:-40:0.07) node[below]{$\tfrac{1}{4}$};
		
		\draw(0,-.55) -- (0,-.65) node[below]{$0$};	
		
		\draw (0,-.9) node[left]{ $C_3$};
		\draw (.875,-.9) -- (1,-.9);
		\draw (.875,-.85) -- (.875,-.95) node[below]{$\tfrac{7}{8}$};
		\draw(1,-.85) -- (1,-.95) node[below]{$1$};		
		\end{tikzpicture}
    \caption{\autoref{example 2}.}
\label{fig: example 2}
\end{figure}
\end{example}
Consider the segmentation $S=\{(C_1,1), (C_2,1), (C_3,4)\}$ with coalitions $C_1,C_2,C_3$, shown in \autoref{fig: example 2}.

For coalition $C_1$, prices $1$ and $2$ are optimal.  For coalition $C_2$, prices $1$ and $3$ are optimal.  Adding consumers from segment $(C_3,4)$ to either segment $(C_1,1)$ or $(C_2,1)$ necessarily increases the optimal price in the latter segments. Thus, the segmentation is saturated.  The segmentation is also clearly efficient.

Notice, however, that for coalition $C_1 \cup C_2 = [0,\tfrac{7}{8})$, price $1$ is the unique optimal price, so the segmentation ${(C_1 \cup C_2,1),(C_3,4)}$ is efficient but not saturated. We can add some consumers with value 4 to $C_1 \cup C_2$ without changing the optimal price.  To see that $S$ is not stable, note that segmentation $S' = \{([0,\tfrac{7}{8}+\epsilon),1),([\tfrac{7}{8}+\epsilon,1],4)\}$ for some small $\epsilon > 0$ Pareto dominates $S$, so $S$ is not stable by \autoref{Prop: dominated then unstable}.

\subsection{Existence of Stable Segmentations}
We use our characterization of stable segmentations to construct a stable segmentation. This proves that stable segmentations always exist.  We start with an example that demonstrates the construction.

\begin{example}[Maximal Equal-revenue Segmentation]\label{example equal revenue}
        There are three values. $1,2,3$, with measures $\tfrac{1}{3},\tfrac{1}{6},\tfrac{1}{2}$, respectively, as shown in \autoref{figure 3}.
\begin{figure}
    \centering
    	\begin{tikzpicture}[scale=4.5, ultra thick]
		\draw (0,0) node[left]{Consumers} -- (1,0);
		\draw (0,.05) -- (0,-0.05) node[below]{$0$};
		\draw (1,.05) -- (1,-0.05) node[below]{$1$};
		\draw (0,.25) node[left]{Values};
		%\draw (0,.5) node[left]{Measures};
		
		\draw[loosely dotted, thick] (1/3,0) node[below, yshift=-3]{$\frac{1}{3}$} -- (1/3,.25);
		\draw[loosely dotted, thick] (1/2,0) node[below, yshift=-3]{$\frac{1}{2}$} -- (1/2,.25);
		
		\draw (1/6,0.25) node{$1$};
		\draw (2.5/6,0.25) node{$2$};
		\draw (7/9,0.25) node{$3$};
		
		%\draw (.2,0.5) node{$0.4$};
		%\draw (.7,0.5) node{$0.6$};
		
			\draw (0,-.3) node[left]{$C_1$};
			\draw (0,-.3) -- (4/9,-.3);
			\draw (0,-.25) -- (0,-0.35) node[below]{$0$};
			%\draw[blue] (4/9,-.35) -- (4/9,-0.45) node[below]{$\frac{4}{9}$};	
			\draw (4/9-0.01,-.25) arc (40:-40:0.07) node[below]{$\frac{4}{9}$};

			\draw (7/9,-.3) -- (1,-.3);
			\draw (7/9,-.25) -- (7/9,-0.35) node[below]{$\frac{7}{9}$};
			\draw (1,-.25) -- (1,-0.35) node[below]{$1$};							
			\draw (0,-.6) node[left]{$C_2$};
			\draw (4/9,-.6) -- (11/18,-.6);
			%\draw[red] (11/18,-.75) -- (11/18,-.85) node[below]{$\frac{11}{18}$};
			\draw (11/18-0.01,-.55) arc (40:-40:0.07) node[below]{$\frac{11}{18}$};

			\draw (4/9,-.55) --(4/9,-.65) node[below]{$\frac{4}{9}$};

			\draw (0,-.9) node[left]{$C_3$};
\draw (11/18,-.9) -- (7/9,-.9);
%\draw[green!30!black] (11/18,-1.15) -- (11/18,-1.25) node[below]{$\frac{11}{18}$};
\draw (11/18,-.85) -- (11/18,-.95) node[below]{$\frac{11}{18}$};

\draw (7/9-0.01,-.85) arc (40:-40:0.07) node[below]{$\frac{7}{9}$};		
		\end{tikzpicture}
		\caption{\autoref{example equal revenue}.}
		\label{figure 3}
\end{figure}
\end{example}
    
    Consider the segmentation $S = \{(C_1,1),(C_2,2),(C_3,3)\}$ shown in \autoref{figure 3}.  Coalition $C_1$ is the largest ``equal-revenue" coalition that includes all values.  That is, the measures $\frac{3}{9}, \frac{1}{9}, \frac{2}{9}$ of the three values in coalition $C_1$ are such that prices $1$, $2$, and $3$ are all optimal, and $C_1$ is the largest such coalition because it contains all the consumers with value 1.\footnote{Any two coalitions $C,C'$ for which all three prices are optimal are proportional, that is, $f^C(v) = \alpha f^{C'}(v)$ for some $\alpha>0$, so the largest such coalition is well-defined.}   Segment $(C_1,1)$ is efficient.  Consumers not in $C_1$ have value either 2 or 3, so adding them to $C_1$ makes price $1$ no longer optimal.
    
    Having put all the consumers with value $1$ in segment $C_1$, we define the rest of the segmentation recursively to guarantee efficiency and saturation.  The values of the remaining consumers are $2$ and $3$, and the measures of consumers with these values are $\frac{1}{18}$ and $\frac{5}{18}$, respectively. Coalition $C_2$, in which values 2 and 3 have measures $\frac{1}{18}$ and $\frac{2}{18}$, is the largest coalition for which prices $2$ and $3$ are both optimal.  The segment $(C_2,2)$ is efficient, and adding any of the remaining consumers, all of whom have value 3, increases the optimal price. The last segment is $(C_3,3)$, which is efficient.  Thus, segmentation $S$ is efficient and saturated.  Because it is also canonical, it is stable.

We now formally define the maximal equal-revenue segmentation.  Let $\bar{F}^C(v_i)$ be the cumulative measure of consumers with values $v_i$ or higher in coalition $C$. If $v_i \bar{F}^C(v_i) = v_j \bar{F}^C(v_j)$, then prices $v_i$ and $v_j$ generate the same revenue for coalition $C$.  Coalition $C$ is an equal-revenue coalition if all consumer values in the coalition generate the same revenue, that is, $v_i\bar{F}^C(v_i)$ is the same for all $v_i$ with $f^C(v_i) > 0$.  A maximal equal-revenue segmentation is defined recursively. The first coalition, $C_1$, is the largest equal-revenue coalition that includes all the values. To construct $C_1$, let $\lambda_1$ be the eventual revenue in coalition $C_1$ from each of the values, that is, $\lambda_1 = v_i\bar{F}^{C_1}(v_i)$ for all $v_i$ in $V$. Recalling that $f^{C_1}(v_i)$ is the measure of consumers with value $v_i$ in coalition $C$, and $f(v_i)$ is the overall measure of consumers with value $v_i$, we have that $f^{C_1}(v_i) =\bar{F}^{C_1}(v_i) - \bar{F}^{C_1}(v_{i+1})= \lambda_1(\frac{1}{v_i}-\frac{1}{v_{i+1}}) \leq f(v_i)$ for all $v_i$, where $\frac{1}{v_{n+1}}\equiv0$. Therefore, the highest value that $\lambda_1$ can take is such that $f^{C_1}(v_i) =f(v_i)$ for some $i$. That is, $\lambda_1$ is the smallest value such that $\lambda_1(\frac{1}{v_i}-\frac{1}{v_{i+1}}) = f(v_i)$ for some $i$. Denote the index of this value by $i_1$, so $\lambda_1(\frac{1}{v_{i_1}}-\frac{1}{v_{i_1+1}}) = f(v_{i_1})$. Then, more succinctly, we define $C_1$ by letting
\begin{align}
    \lambda_1 = \min_{v_i \in V} \frac{f(v_i)}{\frac{1}{v_i}-\frac{1}{v_{i+1}}}=\frac{f(v_{i_1})}{\frac{1}{v_{i_1}}-\frac{1}{v_{i_1+1}}},\label{eq: equal revenue definition largest revenue}
\end{align}
and letting $\bar{F}^{C_1}(v_i) = \lambda_1/v_i$ for all $v_i$.

%Formally, let
%\begin{align}
%    \lambda = \min_{v_i \in V} %\frac{f(v_i)}{\frac{1}{v_i}-\frac{1}{v_{i+1}}},\label{eq: equal revenue %definition largest revenue}
%\end{align}
%and let $\bar{F}^{C_1}(v_i) = \lambda/v_i$ for all $v_i$.  Then $v_i\bar{F}^{C_1}(v_i) = \lambda$ for all $v_i \in V$ and all prices are optimal. Further, $f^{C_1}(v_i) =\bar{F}^{C_1}(v_i) - \bar{F}^{C_1}(v_{i+1})= \lambda(\frac{1}{v_i}-\frac{1}{v_{i+1}}) \leq f(v_i)$ for all $v_i$, and since the value $v_i$ at which the right hand side of \eqref{eq: equal revenue definition largest revenue} is minimized satisfies $\lambda(\frac{1}{v_i}-\frac{1}{v_{i+1}}) = f(v_i)$, we have $f^{C_1}(v_i) =f(v_i)$.

Coalition $C_1$ contains all consumers with value $v_{i_1}$, and adding a positive measure of consumers with other values to $C_1$ makes price $v_{i_1}$ sub-optimal.  Therefore coalition $C_1$ cannot be any larger and still be an equal-revenue coalition.  The first segment in a maximal equal-revenue segmentation is $(C_1,v_1)$, and the rest of the segmentation is defined recursively, where the coalition $C_j$ in the $j$'th segment is the largest equal-revenue coalition that includes all the values that remain after removing the consumers in $C_1,\ldots,C_{j-1}$, that is, $\{v_i: f^{C_j}(v_i)>0\} = \{v_i: f^{C \backslash \cup_{j'<j}C_{j'}}(v_i)>0\}$, and the price in the $j$'th segment is $\min\{v_i: f^{C_j}(v_i)>0\}$.  This process ends because in each step the number of remaining values decreases by at least 1.

The maximal equal-revenue segmentation is not necessarily canonical.  For example, if the first equal-revenue coalition $C_1$ exhausts some value other than $v_1$, then the second coalition, $C_2$, will also include consumers with value $v_1$.  By \autoref{conjecture: stable with a single product}, to establish that the maximal equal-revenue segmentation is stable, we need show that its induced canonical segmentation is efficient and saturated.

\begin{proposition}
    The maximal equal-revenue segmentation is stable.
\end{proposition}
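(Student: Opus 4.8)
The plan is to invoke \autoref{conjecture: stable with a single product}, which reduces the claim to showing that the \emph{induced canonical segmentation} of the maximal equal-revenue segmentation is efficient and saturated. Write the maximal equal-revenue segmentation as $\{(C_1,p_1),\ldots,(C_K,p_K)\}$, where $C_j$ is the largest equal-revenue coalition over the values $V_j$ that remain after removing $C_1,\ldots,C_{j-1}$ and $p_j=\min V_j=\underline{v}(C_j)$. The key structural fact I would record first is that, since $C_j$ is an equal-revenue coalition, $v_i\bar{F}^{C_j}(v_i)$ is constant across all $v_i\in V_j$; because the revenue of price $v_i$ in $C_j$ equals $v_i\bar{F}^{C_j}(v_i)$, \emph{every} value present in $C_j$ is an optimal price for $C_j$.

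The induced canonical segmentation merges all segments sharing a common price $p$ into a single coalition $D_p$. The second ingredient I would use is additivity of revenue over disjoint coalitions: the revenue of any price $q$ in $D_p$ is the sum of the revenues of $q$ in the coalitions $C_j$ that make up $D_p$. Since two coalitions merge precisely when they share the same lowest value, every $C_j$ comprising $D_p$ has $\underline{v}(C_j)=p$, so $p$ is present and hence optimal in each such $C_j$; by additivity $p$ is optimal for $D_p$, and $\underline{v}(D_p)=p$. This gives efficiency of the canonical segmentation.

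For saturation I would apply \autoref{lem: succinct saturation}: given two canonical segments $(D_p,p)$ and $(D_{p'},p')$ with $p<p'$, it suffices to exhibit a price $\hat{p}$ optimal for $D_p$ with $p<\hat{p}\le \underline{v}(D_{p'})=p'$, and I claim $\hat{p}=p'$ works. Because $(D_{p'},p')$ is a canonical segment, some stage $j'$ of the recursion has $\min V_{j'}=p'$, which forces every value below $p'$---in particular $p$---to be exhausted before stage $j'$. Hence every stage $j$ contributing to $D_p$ (all of which have $\min V_j=p$, so $p$ is not yet exhausted) satisfies $j<j'$, and since the value $p'$ has positive remaining measure at stage $j'$ it has positive remaining measure at all earlier stages; thus $p'\in V_j$ for every $C_j$ comprising $D_p$. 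By the structural fact, $p'$ is then an optimal price for each such $C_j$, and by additivity $p'$ is optimal for $D_p$. Since $p<p'=\underline{v}(D_{p'})$, \autoref{lem: succinct saturation} yields saturation, completing the proof.

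I expect the main obstacle to be the bookkeeping in the saturation step---namely verifying that the exhaustion order of the recursion guarantees the higher canonical price $p'$ survives as a present value in every coalition making up the lower canonical coalition $D_p$. Everything else (the equal-revenue characterization of optimal prices and the additivity of revenue that transfers optimality from the pieces $C_j$ to the merged coalition $D_p$) is routine, so the argument hinges on correctly tracking which values remain at each stage of the construction.
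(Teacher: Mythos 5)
Your proposal is correct and follows essentially the same route as the paper's proof: efficiency is immediate because each equal-revenue segment's price is its lowest value, and saturation follows by noting that the higher canonical price remains a present (hence, by the equal-revenue property, optimal) value in every constituent coalition of the lower-price canonical segment, so that additivity of revenue transfers optimality to the merged coalition and \autoref{lem: succinct saturation} applies. Your explicit bookkeeping of the exhaustion order is just a more detailed rendering of the paper's observation that later coalitions' value sets are nested inside earlier ones.
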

\begin{proof}
The price in each segment of the maximal equal-revenue segmentation is equal to the lowest consumer value in the segment, so the segmentation is efficient and the same is true for its induced canonical segmentation.  It remains to show that the induced canonical segmentation is saturated.

By construction of the maximal equal-revenue segmentation, for any two segments $(C_i,v_i)$ and $(C_j,v_j)$ with $i<j$, the set of consumer values in $C_j$ is a subset of that in $C_i$.
%That is, for any value $v$, if $f^{C_j}(v)>0$ then $f^{C_i}(v)>0$. 
Since coalition $C_j$ contains consumers with value $v_j$ ($f^{C_j}(v_j)>0$), so does coalition $C_i$. Because $C_i$ is an equal-revenue segmentation, price $v_j$ is optimal for coalition $C_i$. 
Consider the segment with price $v_i$ in the canonical segmentation. By definition of the induced canonical segmentation, the coalition in this segment is the union of all the coalitions with price $v_i$ in the maximal equal-revenue segmentation. Price $v_j$ is optimal for each of these coalitions, as argued in the previous paragraph, and is therefore optimal for the union of these coalitions. Thus, the induced canonical segmentation is saturated by \autoref{lem: succinct saturation}.
\end{proof}

The maximal equal-revenue segmentation is not the unique stable segmentation.  Here is an informal description of another construction of a stable segmentation.  Put all consumers of value $v_1$ in the first coalition, and continually add consumers with the lowest remaining value to the first coalition until some price $v_i$ other than $v_1$ also becomes  optimal.   This forms the first coalition, $C_1$. The first segment is $(C,v_1)$.  Repeat this process with the remaining consumers (the last segment may have only one optimal price).  The resulting segmentation is canonical, efficient, and saturated.  Saturation follows because given a segment $(C,v_j)$ so constructed, a value $v_k>v_j$ becomes optimal for coalition $C$ only when we have already added all the available consumers with values lower than $v_k$ to $C$, so the value of any consumer in a segment with a higher price is at least $v_k$, and adding such consumers to $C$ makes price $v_j$ sub-optimal.  This segmentation is also typically different from the maximal equal-revenue segmentation because the first segment does not generally include all values.  We proved the existence of stable segmentations by constructing the maximal equal-revenue segmentation because it maximizes average consumer surplus, as we discuss next.

\subsection{Stability vs. Maximizing Average Consumer Surplus}
The maximal equal-revenue segmentation was first introduced by \citet{BBM15}.  They showed that this segmentation maximizes average consumer surplus across all segmentations, but is not necessarily the only segmentation that does so.  What is the relationship between stability and maximization of average consumer surplus? The following two examples show that stability is neither necessary nor sufficient for maximization of average consumer surplus.

\begin{example}
        \textbf{\emph{(A segmentation that maximizes average consumer surplus and is not stable)}}\label{example max CS not stable} Consider again the example from \autoref{example equal revenue} with three values, $1,2,3$, and measures $\tfrac{1}{3},\tfrac{1}{6},\tfrac{1}{2}$, and segmentation $S = \{(C_1,1),(C_2,2)\}$ with coalitions $C_1 = [0,\tfrac{1}{3}] \cup [\tfrac{5}{6},1]$ and $C_2 = (\tfrac{1}{3},\tfrac{5}{6})$.
\end{example}        
        Coalition $C_1$ contains all value 1 consumers and some value 3 consumers in a proportion that makes prices $1$ and $3$ optimal. Coalition $C_2$ contains remaining consumers, whose proportions are such that prices $2$ and $3$ are optimal.  Segmentation $S$ maximizes average consumer surplus across all segmentations.\footnote{The segmentation is efficient so it maximizes total surplus.  It also minimizes the seller's revenue across all segmentations.  This is because the same price that is optimal for the set of all consumers, 3, is also optimal for each coalition.}
        
        But the segmentation is not stable.  Since it is canonical and efficient, to show that it is not stable, we show that it is not saturated.  Indeed, adding a small measure $\epsilon>0$ of value 2 consumers to $C_1$ does not make price $1$ sub-optimal: price $2$ is not optimal for $C_1$, so if $\epsilon$ is small enough, price $2$ remains sub-optimal, and the addition increases the revenue from price $1$ but does not change the revenue from price $3$.

\begin{example}\label{example 4}
\textbf{\emph{(A stable segmentation that does not maximize average consumer surplus)}}
        There are three values, $1,2,3$, each with measure $\tfrac{1}{3}$,  as shown in \autoref{fig:example 4}.
\begin{figure}
    \centering
    		\begin{tikzpicture}[scale=4.5, ultra thick]

	\draw (0,0) node[left]{Consumers} -- (1,0);
	\draw (0,.05) -- (0,-0.05) node[below]{$0$};
	\draw (1,.05) -- (1,-0.05) node[below]{$1$};
	\draw (0,.25) node[left]{Values};
	%\draw (0,.5) node[left]{Measures};
	
	\draw[loosely dotted, thick] (1/3,0) node[below, yshift=-3]{\footnotesize $\frac{1}{3}$} -- (1/3,.25);
	\draw[loosely dotted, thick] (2/3,0) node[below, yshift=-3]{\footnotesize $\frac{2}{3}$} -- (2/3,.25);
	
	\draw (1/6,0.25) node{$1$};
	\draw (1/2,0.25) node{$2$};
	\draw (5/6,0.25) node{$3$};
	
	%\draw (.2,0.5) node{$0.4$};
	%\draw (.7,0.5) node{$0.6$};
	
	\draw (0,-.3) node[left]{$C_1$};
	\draw (0,-.3) -- (2/3,-.3);
	\draw (0,-.25) -- (0,-0.35) node[below]{$0$};
	%\draw[blue] (2/3,-.35) -- (2/3,-0.45) node[below]{$\frac{2}{3}$};	
	\draw (2/3-0.01,-.255) arc (40:-40:0.07) node[below]{\footnotesize $\frac{2}{3}$};								
	
	\draw (0,-.6) node[left]{$C_2$};
	\draw (2/3,-.6) -- (1,-.6);
	\draw (1,-.55) -- (1,-.65) node[below]{$1$};
	
	\draw (2/3,-.55) -- (2/3,-.65) node[below]{\footnotesize $\frac{2}{3}$};

		\end{tikzpicture}
		\caption{\autoref{example 4}}
		\label{fig:example 4}
\end{figure}
\end{example}

Consider the segmentation $S = \{(C_1,1),(C_2,3)\}$ in \autoref{fig:example 4}. Coalition  $C_1$ consists of the consumers with values 1 and 2; Coalition $C_3$ consists of the consumers with value 3.  This segmentation is canonical, efficient, and saturated.  It is therefore stable.

The surplus of this segmentation is $\tfrac{1}{3}$, whereas the surplus of the maximal equal-revenue segmentation is $\tfrac{2}{3}$.\footnote{The maximal equal-revenue segmentation is $\{(C''_1,1),(C''_2,2)\}$ where $(f^{C''_1}(1),f^{C''_1}(2),f^{C''_1}(3)) = (\tfrac{1}{3},\tfrac{1}{9},\tfrac{2}{9})$ and $(f^{C''_2}(1),f^{C''_2}(2),f^{C''_2}(3)) = (0,\tfrac{2}{9},\tfrac{1}{9})$.  The surplus of value 2 and 3 consumers is $1$ and $2$ in the first segment, and the surplus of value 3 consumers is $1$ in the second segment.  The average consumer surplus is therefore $\tfrac{1}{9}\cdot 1 + \tfrac{2}{9}\cdot 2 + \tfrac{1}{9}\cdot 1 = \tfrac{2}{3}$.}  For some intuition, it is illuminating to study the marginal improvement in the average consumer surplus of $S$ obtained by swapping the same measure of value 2 and 3 consumers.  For this, consider a coalition $C'_1$ obtained from $C_1$ by removing measure $\epsilon$ of value 2 consumers and adding a measure $\epsilon$ of value 3 consumers, and a coalition $C'_2$ that contains the remaining consumers.  If $\epsilon > 0$ is small enough, price $1$ is optimal for $C'_1$ and price $3$ is optimal for $C'_3$, so $S' = \{(C'_1,1),(C'_3,3)\}$ is a segmentation.  To compare the  average consumer surplus of $S$ and $S'$, it suffices to consider the swapped value 2 and 3 consumers.  Each value 2 consumer loses 1 unit of surplus: their surplus is $1$ in $S$ and $0$ in $S'$. Each value 3 consumer gains 2 units of surplus each: their surplus is $0$ in $S$ and $2$ in $S'$.\footnote{Notice that the change in the offered price is $2$ for the value 2 consumers (from $1$ to $3$) and $-2$ for the value 3 consumers (from $3$ to $1$).  But even though this change has the same absolute value, the surplus change for the value 3 consumers is higher than the value 2 consumers because value 2 consumers do not buy the product at a price higher than 2 (so increasing the price they face from 2 to 3 does not change their surplus.)}

\subsection{Pareto Dominance and Efficiency}
Recall from \autoref{sec: characterization} that any stable segmentation is Pareto undominated and efficient.  In fact, any Pareto undominated segmentation is efficient, as the following result shows.
%\citet{BBM15} show that any segmentation that maximizes average consumer surplus is also Pareto undominated and efficient.  So we here study the connection between Pareto undominance and efficiency.

\begin{proposition}\label{prop: Pareto undominated then efficient}
    Any Pareto undominated segmentation is efficient.
\end{proposition}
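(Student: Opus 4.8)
The plan is to prove the contrapositive: any inefficient segmentation is Pareto dominated, hence not Pareto undominated. This is essentially the construction already carried out to establish the necessity of efficiency in the proof of \autoref{conjecture: stable with a single product}, so the proposition is a corollary of that argument applied directly to the segmentation rather than to an induced canonical one. I would begin by assuming $S$ is a segmentation that is not efficient, so there is a segment $(C,p)$ in $S$ with $p>\underline{v}(C)$, meaning a positive measure of consumers in $C$ have value strictly below $p$ and receive zero surplus.

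Next I would reproduce the splitting construction. Let $\bar{C}\subseteq C$ consist of all consumers in $C$ with value strictly below $p$ together with a small positive measure of the highest-value consumers in $C$. I would argue that when this added mass is small enough, any optimal price $p'$ for $\bar{C}$ satisfies $p'<p$: the revenue from any price at least $p$ is bounded by $p$ times the small added mass, whereas the revenue from price $\underline{v}(C)$ is bounded below by $\underline{v}(C)$ times the fixed positive mass of the value-below-$p$ consumers, so for small enough added mass the optimum is attained below $p$. I would then invoke the observation from the theorem's proof that $p$ remains optimal for $C\setminus\bar{C}$: deleting the non-purchasing low-value consumers does not affect the revenue of $p$, and deleting a little of the top can only weakly lower the optimal price, while $p$ is already the lowest remaining value.

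Finally, I would form $\bar{S}$ from $S$ by replacing $(C,p)$ with the two segments $(C\setminus\bar{C},p)$ and $(\bar{C},p')$, leaving all other segments intact, and compare surpluses: consumers outside $C$ and consumers in $C\setminus\bar{C}$ face unchanged prices, while every consumer in $\bar{C}$ now faces $p'<p$ and is weakly better off, with the added highest-value consumers (who previously bought at $p$ and now buy at $p'$) strictly better off on a positive measure. Hence $\bar{S}$ Pareto dominates $S$, so $S$ is not Pareto undominated. The only genuinely delicate step is producing the split $(\bar{C},p')$ with $p'<p$ that simultaneously leaves $p$ optimal for $C\setminus\bar{C}$; since this optimality bookkeeping is exactly what is verified inside the proof of \autoref{conjecture: stable with a single product}, I would cite it rather than re-derive it.
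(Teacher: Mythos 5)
Your proof is correct, but it uses a different decomposition than the paper's own proof of this proposition: you recycle the split from the necessity-of-efficiency step in the proof of \autoref{conjecture: stable with a single product} (all consumers of $C$ with values strictly below $p$, plus a pinch of the highest-value consumers, priced at some optimal $p'<p$), whereas the paper's proof of the proposition splits $(C,p)$ by taking $C'$ to contain \emph{all} consumers with the lowest value $\underline{v}(C)$ together with an $\epsilon$ fraction of the consumers of \emph{every other} value in $C$, priced exactly at $\underline{v}(C)$. The paper's proportional construction buys two things: the new segment is itself efficient, and the verification that $p$ remains optimal for $C\setminus C'$ collapses to a one-line scaling identity, since the revenue of every price above $\underline{v}(C)$ in $C\setminus C'$ is exactly $(1-\epsilon)$ times its revenue in $C$. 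Your route buys economy: the delicate optimality bookkeeping is already carried out inside the theorem's proof, and citing it is legitimate and non-circular, because that step establishes ``inefficient $\Rightarrow$ Pareto dominated'' as a self-contained construction (before any appeal to \autoref{Prop: dominated then unstable}) and nowhere uses canonicity, so it applies verbatim to an arbitrary segmentation, as you note. One small imprecision in your small-mass argument: the revenue in $\bar{C}$ from a price $q\geq p$ is bounded by $q$ (hence by $v_n$) times the added mass, not by $p$ times it; this does not affect the conclusion, since the bound still vanishes as the added mass shrinks while the revenue from $\underline{v}(C)$ stays bounded away from zero.
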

\begin{proof}
Consider any inefficient segmentation $S$.  We show that $S$ is Pareto dominated by some other segmentation. By definition there must be a segment $(C,p)$ in $S$ such that $p$ is higher than the lowest value $\underline{v}(C)$ in $C$.  Construct a coalition $C' \subseteq C$ that contains all the consumers of lowest value $\underline{v}(C)$ from $C$, and an $\epsilon$ fraction of consumers of each other value from $C$.  If $\epsilon$ is small enough, price $\underline{v}(C)$ is optimal in $C'$, so $(C', \underline{v}(C))$ is a segment.  Also, price $p$ remains optimal in the remaining segment $C \backslash C'$. To see this, first notice that $\underline{v}(C)$ is lower than all values in $C\backslash C'$, and so price $\underline{v}(C)$ is not optimal for $C \backslash C'$. Second, for any $v_i > \underline{v}(C)$, we have $f^{C \backslash C'}(v_i) = (1-\epsilon)f^C(v_i)$. Therefore the revenue of any price $p' \in V \backslash \{\underline{v}(C)\}$ in $C \backslash C'$ is 
\begin{align*}
    p' \sum_{v_i \geq p'} f^{C \backslash C'}(v_i) = (1-\epsilon) (p' \sum_{v_i \geq p'} f^{C}(v_i)),
\end{align*}
which is maximized at $p' = p$ since price $p$ is optimal in $C$.  

Now consider a segmentation $S'$ that is identical to $S$, except the segment $(C,p)$ is replaced with two segments $(C',\underline{v}(C))$ and $(C \backslash C', p)$.  Since the price in the segment $(C',\underline{v}(C))$ is lower than that in $(C,p)$, all consumers other than those with value $\underline{v}(C)$ are strictly better off in $C'$ relative to $C$, and the consumers with value $\underline{v}(C)$ have zero surplus in either case.  Further, since the prices in segments $(C \backslash C', p)$ and $(C,p)$ are the same, the surplus of each consumer in $C \backslash C'$ remains unchanged. Therefore $S'$ Pareto dominates $S$.
\end{proof}

There exist efficient segmentations that are not Pareto undominated.  For instance, the segmentation that puts all the consumers of each value in a different segment is efficient but is not Pareto undominated because it gives zero surplus to each consumer. This segmentation is not stable, because, as we have shown, stable segmentations are Pareto undominated.

There also exist Pareto undominated segmentations that are not stable. Such segmentations have induced canonical segmentations that are not saturated, since Pareto undominated segmentations are efficient. In order for an efficient segmentation that is not saturated to be Pareto undominated, it has to be that when consumers are made better off by moving from a segment with a higher price to an unsaturated segment with a lower price without increasing the latter segment's price, the price in the segment with the higher price necessarily increases. This is the case in \autoref{example max CS not stable}, which describes a segmentation that maximizes average consumer surplus, and is therefore Pareto undominated and efficient (by \citealp{BBM15}), but is not stable.

\subsection{Fragmentation-Proofness and Efficiency }

We have seen that stability is equivalent to saturation and efficiency of the induced canonical segmentation. We also expressed saturation by looking at the set of optimal prices for different segments. To further understand the interaction between saturation and efficiency, we now describe efficient segmentations as segmentations that are immune to certain objections. To this end we define \emph{fragmentation-proofness}, which excludes objections by coalitions that include consumers from more than one segment.

\begin{definition}
   A segmentation $S$ is fragmentation-proof if there exists no objection $(C,p)$ to $S$ such that $C \subseteq C'$ for some segment $(C',p')$ in $S$.
\end{definition}

A fragmentation-proof segmentation does not have an objection by any coalition that is a subset of consumers in an existing coalition. We show that fragmentation-proofness is equivalent to efficiency.

\begin{proposition}
    A segmentation is fragmentation-proof if and only if it is efficient.
\end{proposition}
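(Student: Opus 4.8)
The plan is to prove both directions of the equivalence between fragmentation-proofness and efficiency, with the contrapositive being the most convenient form for each direction.

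For the direction that efficiency implies fragmentation-proofness, I would argue by contradiction. Suppose $S$ is efficient but admits an objection $(C,p)$ with $C \subseteq C'$ for some segment $(C',p')$ in $S$. By efficiency, $p' = \underline{v}(C')$, so every consumer in $C'$ (and hence in $C$) faces the lowest price compatible with buying, and in particular every consumer in $C$ has surplus $CS(c,S) = \max\{v(c)-\underline{v}(C'),0\}$ in $S$. For $(C,p)$ to object, I need $CS(c,p) \geq CS(c,S)$ for all $c \in C$ with strict inequality on a positive measure. Since $p$ must be an optimal price for $C$, and all consumers in $C$ already face the efficient price $\underline{v}(C')$ of the segment containing them, the key observation is that no consumer can strictly gain: a consumer strictly gains from $(C,p)$ only if $p < \underline{v}(C') \le v(c)$, but then $p < \underline{v}(C) = \min\{v(c): c \in C\}$, which would mean $p$ charges below the lowest value in $C$ and is therefore not optimal (raising the price to $\underline{v}(C)$ sells to the same consumers at a strictly higher price). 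This contradicts $p$ being optimal for $C$, so no such objection exists.

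For the converse, that fragmentation-proofness implies efficiency, I would again use the contrapositive: if $S$ is inefficient, I exhibit an objection $(C,p)$ contained within a single existing segment. This is essentially the construction already carried out in the proof of \autoref{prop: Pareto undominated then efficient}. Take a segment $(C',p')$ with $p' > \underline{v}(C')$, and let $C \subseteq C'$ consist of all the lowest-value consumers in $C'$ together with an $\epsilon$-fraction of consumers of each higher value, chosen so that $\underline{v}(C')$ is optimal for $C$. Then $(C,\underline{v}(C'))$ is a segment with $C \subseteq C'$, and since $\underline{v}(C') < p'$, every consumer in $C$ with value strictly above $\underline{v}(C')$ is strictly better off while the lowest-value consumers (who have zero surplus either way) are no worse off. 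This is exactly an objection contained in the single segment $(C',p')$, so $S$ is not fragmentation-proof.

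The main obstacle, and the step deserving the most care, is the efficiency-implies-fragmentation-proofness direction: I must rule out \emph{every} possible objecting segment $(C,p)$ inside $C'$, not just a convenient one, and the argument hinges on correctly tying together three facts — that $p$ is optimal for $C$, that optimality forces $p \ge \underline{v}(C)$ (so $p$ cannot be strictly below the minimum value in $C$), and that efficiency makes every consumer's status-quo surplus in $S$ equal to their surplus at the efficient price of their home segment. The clean way to organize this is to note that for any $c \in C$, strict improvement $CS(c,p) > CS(c,S) = \max\{v(c)-\underline{v}(C'),0\}$ requires $p < v(c)$ and $p < \underline{v}(C')$, hence $p < \underline{v}(C) \le \underline{v}(C')$; but $p < \underline{v}(C)$ contradicts the optimality of $p$ for $C$, since the price $\underline{v}(C)$ sells to all of $C$ at a strictly higher per-unit price and thus yields strictly more revenue.
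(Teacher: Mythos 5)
Your proof is correct and follows essentially the same route as the paper's: the efficiency-implies-fragmentation-proofness direction rests on the same key fact (an optimal price for a sub-coalition $C$ of an efficient segment cannot lie below $\underline{v}(C)$, hence below the segment's price), and the converse direction uses the same carve-out construction of the lowest-value consumers plus a small fraction of higher-value consumers, which the paper also employs (there taking only highest-value consumers, an immaterial difference). No gaps.
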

\begin{proof}
Consider an efficient segmentation and a segment $(C,p)$ in the segmentation.  Because the segmentation is efficient, $p=\underline{v}(C)$ is the lowest consumer value in $C$.  Thus, the optimal price for any subset of $C$ is at least $p$ and there exists no objecting segment $(C',p')$ with $C' \subseteq C$.

Now consider an inefficient segmentation and a segment $(C,p)$ in the segmentation such that $p>\underline{v}(C)$.  Consider a coalition $C' \subseteq C$ that contains all the value $\underline{v}(C)$ consumers in $C$ and a small enough measure of the consumers with the highest value in $C$ so that the unique optimal price for $C'$ is $\underline{v}(C)$. The segment $(C',\underline{v}(C))$ objects to $S$, so $S$ is not fragmentation-proof.
\end{proof}

%Stable segmentations are efficient, but efficient segmentations need not be stable.  For instance, a segmentation that has a segment $(C_i,v_i)$ for each $v_i$ where $C_i$ consists of all consumers with value $v_i$ is efficient.  But this segmentations is not stable because it is not saturated.  We can add some consumers with value $v_i>v_1$ to the first segment $(C_1,v_1)$ without changing the price in the first segment.  Thus fragmentation-proofness is a less demanding notion than stability.

\subsection{Environments with Two Values}
We have seen that, in general, stability, maximizing average consumer surplus, and Pareto undominance are different concepts. When there are only two consumer values, however, these concepts coincide.  Moreover, there is essentially a single segmentation that satisfies these properties, in a sense slightly weaker than surplus-equivalence.  Formally,
two segmentations are \emph{weakly surplus-equivalent} if each of the two corresponding induced canonical segmentations can be obtained from the other via a measure-preserving mapping that, for each value, maps the set of consumers with that value to itself. More precisely, for any two segments $(C,p)$ and $(C',p)$ with the same price in the two canonical segmentations, $f^C(v) = f^{C'}(v)$ for all $v$. Clearly, any two surplus-equivalent segmentations are weakly surplus-equivalent because surplus-equivalence requires that $C =C'$.

\begin{proposition}\label{proposition: characterization and existence with two types}
Suppose that there are only two values.  For any segmentation $S$, the following three statements are equivalent:
\begin{enumerate}
    \item $S$ is stable.
    \item $S$ is Pareto undominated.
    \item $S$ maximizes average consumer surplus.
\end{enumerate}
 Segmentations that satisfy these three equivalent properties are weakly surplus-equivalent.
\end{proposition}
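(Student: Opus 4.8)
The plan is to reduce the whole proposition to a single one-dimensional statistic. Writing the two values as $v_1<v_2$ with measures $f(v_1),f(v_2)$, I would first observe that in any segmentation a value-$v_1$ consumer obtains zero surplus (she either pays $v_1$ or is priced out), while a value-$v_2$ consumer obtains $v_2-v_1$ if she faces price $v_1$ and zero if she faces price $v_2$. Hence the average consumer surplus of any segmentation is $(v_2-v_1)m$, where $m$ is the measure of value-$v_2$ consumers facing price $v_1$. Summing the low-price optimality constraint $f^{C}(v_2)\le \tfrac{v_1}{v_2-v_1}f^{C}(v_1)$ over the price-$v_1$ segments gives $m\le \min\{f(v_2),\tfrac{v_1}{v_2-v_1}f(v_1)\}=:m^*$, and this bound is attained (for instance by the maximal equal-revenue segmentation). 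So statement $3$ is exactly $m=m^*$.

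Next I would collect the implications already available in general: $1\Rightarrow2$ is the contrapositive of \autoref{Prop: dominated then unstable}; $3\Rightarrow2$ holds because a Pareto-dominating segmentation would have strictly larger average surplus; and by \autoref{prop: Pareto undominated then efficient} every segmentation in $2$ is efficient, i.e.\ every value-$v_1$ consumer faces price $v_1$. It then remains, under two values, to prove $2\Rightarrow3$ and $2\Rightarrow1$.

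For $2\Rightarrow3$ I would argue by contradiction: if an efficient $S$ has $m<m^*$ then $m<f(v_2)$, so a positive measure of value-$v_2$ consumers face price $v_2$; and since efficiency gives $\sum f^{C_j}(v_1)=f(v_1)$ over the price-$v_1$ segments, $m<\tfrac{v_1}{v_2-v_1}f(v_1)$ forces strict slack $f^{C}(v_2)<\tfrac{v_1}{v_2-v_1}f^{C}(v_1)$ in some price-$v_1$ segment $C$. Moving a small measure of value-$v_2$ consumers from a price-$v_2$ segment into $C$ keeps $v_1$ optimal in $C$ and $v_2$ optimal in the shrunken segment, strictly raises the moved consumers' surplus, and changes no one else's; this Pareto-dominates $S$, contradicting $2$. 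For $2\Rightarrow1$ I would pass to the induced canonical segmentation, which has at most one price-$v_1$ segment $(C,v_1)$ (with $f^{C}(v_1)=f(v_1)$ by efficiency) and at most one price-$v_2$ segment, the latter containing only value-$v_2$ consumers; efficiency of the canonical segmentation is immediate, so by \autoref{conjecture: stable with a single product} it suffices to verify saturation. Saturation is vacuous unless both segments are present, in which case $m<f(v_2)$ forces $m^*=\tfrac{v_1}{v_2-v_1}f(v_1)$, and $m=m^*$ (from $2\Rightarrow3$) yields $f^{C}(v_2)=\tfrac{v_1}{v_2-v_1}f^{C}(v_1)$, so $v_2$ is optimal for $C$ and \autoref{lem: succinct saturation} gives saturation; hence $S$ is stable.

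Finally, for weak surplus-equivalence I would note that the arguments above pin the canonical value-measures down completely: the price-$v_1$ coalition has $f(v_1)$ value-$v_1$ consumers and $m^*$ value-$v_2$ consumers, and the price-$v_2$ coalition has the remaining $f(v_2)-m^*$ value-$v_2$ consumers and no value-$v_1$ consumers. As these depend only on the primitives, any two segmentations satisfying $1$--$3$ have identical per-price value-measures in their canonical representations and are therefore weakly surplus-equivalent. I expect the only real work to be in $2\Rightarrow3$: the content is the reduction of surplus to the single statistic $m$ together with the verification that a Pareto-undominated efficient segmentation cannot leave ``slack'' in a low-price segment while high-value consumers sit at the high price. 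Once $m=m^*$ is established, stability and weak surplus-equivalence follow by the bookkeeping above.
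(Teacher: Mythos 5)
Your proof is correct, and its core is the same as the paper's: both arguments reduce everything to the measure of value-$v_2$ consumers facing price $v_1$, identify the maximum with the equal-revenue boundary condition $v_1\bigl(f^{C}(v_1)+f^{C}(v_2)\bigr) = v_2 f^{C}(v_2)$ plus the absence of value-$v_1$ consumers in the high-price segment, and obtain stability by checking efficiency and saturation of the canonical segmentation via \autoref{lem: succinct saturation} and \autoref{conjecture: stable with a single product}. The organizational difference is that the paper first splits on whether $v_1$ is optimal for the grand coalition $[0,1]$: in that (efficient-market) case it establishes the stability claim by a forward reference to the core characterization (\autoref{prop: core empty then = stable}), and it runs the argument you give only in the inefficient-market case. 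Your single statistic $m$ with the bound $m \leq m^* = \min\bigl\{f(v_2), \tfrac{v_1}{v_2-v_1}f(v_1)\bigr\}$ absorbs both cases at once --- the efficient-market case is exactly $m^* = f(v_2)$, where saturation is vacuous --- so your write-up avoids both the case split and any reliance on the core machinery, relying only on \autoref{Prop: dominated then unstable}, \autoref{prop: Pareto undominated then efficient}, and the main characterization; the price is that you must spell out the slack-extraction Pareto improvement that the paper relegates to a footnote. Both routes then deliver weak surplus-equivalence identically, since the canonical per-price value measures are pinned down by the primitives.
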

\begin{proof}
Suppose first that $v_1$ is optimal for the coalition $[0,1]$.  Then the trivial segmentation $\{([0,1]),v_1\}$ gives the highest possible surplus to all consumers, so a segmentation is Pareto undominated if and only if it maximizes average consumer surplus if and only if it is surplus-equivalent (and therefore weakly surplus-equivalent) to this segmentation.  We will argue in \autoref{prop: core empty then = stable} below that if $v_1$ is optimal for $[0,1]$, then all stable segmentations are surplus-equivalent to $\{([0,1]),v_1\}$.

Now suppose that $v_1$ is not optimal for $[0,1]$, that is, $v_1(f^{[0,1]}(v_1)+f^{[0,1]}(v_2)) < v_2f^{[0,1]}(v_2)$.  Consider any segmentation $S = \{(C_1,v_1),(C_2,v_2)\}$ such that $C_1$ contains all the value 1 consumers a measure of value 2 consumers so that $v_1(f^{C_1}(v_1)+f^{C_1}(v_2)) = v_2f^{C_1}(v_2)$, and $C_2$ contains the remaining value $2$ consumers.  We show that any segmentation that satisfies either of the three properties, stability, Pareto undominance, and maximizing average consumer surplus, is weakly surplus-equivalent to $S$.

Consider the induced canonical segmentation $S'' = \{(C''_1,v_1),(C''_2,v_2)\}$ of some segmentation $S'$.  The surplus of value $v_2$ consumers in $C''_1$ is $v_2-v_1$, and the surplus of all other consumers is zero.  So $S'$ maximizes average consumer surplus if and only if $C''_1$ has the maximal possible measure of value $v_2$ consumers, that is, if and only if it is Pareto undominated.
%if and only there are as many value $v_2$ consumers in $C''_1$.
And $C''_1$ has the maximal possible measure of value $v_2$ consumers if and only if $f^{C''_2}(v_1) = 0$ and  $v_1(f^{C''_1}(v_1)+f^{C''_1}(v_2)) = v_2f^{C''_1}(v_2)$.\footnote{Indeed, if $f^{C''_2}(v_1) > 0$, then we can add some consumers of value $v_1$ and $v_2$ from $C''_2$ to $C''_1$; and if $v_1(f^{C''_1}(v_1)+f^{C''_1}(v_2)) > v_2f^{C''_1}(v_2)$, then we can add some consumers with value $v_2$ from $C''_2$ to $C''_1$.} That is, of and only if $S'$ is weakly surplus-equivalent to $S$.  Also, $f^{C''_2}(v_1) = 0$ and $v_1(f^{C''_1}(v_1)+f^{C''_1}(v_2)) = v_2f^{C''_1}(v_2)$ mean that $S''$ is saturated and efficient so $S'$ is stable.
\end{proof}

\section {Relationship to Cooperative Game Theory}\label{sec: cooperative}

Our model can be described as a cooperative game with non-transferable
utility (NTU). The players are the consumers. For each coalition $C$ of consumers, the set of utility vectors feasible for $C$ comprises the payoff profiles of the consumers in $C$ across all segmentations of $C$ (so $C$, instead of $[0,1]$,  is taken to be the set of consumers).\footnote{The continuum of players in our settings requires minor adjustments to the cooperative solution concepts we discuss, which are typically defined for games with a finite number of players.}
%In what follows we will refer interchangeably to a utility vector and the segmentation that generates it when this does not cause confusion.
We first examine the core of the game and relate it to our notion of stability. We then relate stability to several other solution concepts for NTU games.

\subsection{The core}\label{sec: core}

We define the core to be the set of segmentations to which there is no objection.\footnote{For our purposes it is more convenient to refer to a set of segmentations instead of the payoff vectors they induce.}

\begin{definition}[Core]
    The core is a set of segmentations.  A segmentation $S$ is in the core if there exists no segment that objects to $S$.
\end{definition}

The core is a demanding solution concept. We characterize when the core is not empty and show that when the core is not empty it contains an essentially unique segmentation.
%We say that two segmentations $S$ and $S'$ are surplus-equivalent if the surplus of each consumer is the same in the two segmentations, that is, $CS(c, S)= CS(c, S')$ for almost all $c \in [0,1]$.  Recall that a segmentation is trivial if it has a single segment containing all consumers.

\begin{proposition}\label{thm: core iff efficient single product}
If the market is efficient, that is, price $v_1$ is optimal for the set $[0,1]$ of all consumers, then the core consists of all segmentations that are surplus-equivalent to the trivial segmentation $\{([0,1],v_1)\}$.  Otherwise, the core is empty.
\end{proposition}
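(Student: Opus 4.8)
The plan is to treat the two cases separately, with the efficient case reducing to a short direct argument and the inefficient case requiring an explicit objecting segment for \emph{every} candidate segmentation.

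First I would handle the case where $v_1$ is optimal for $[0,1]$. The key observation is that $([0,1],v_1)$ is then a legitimate segment, and since $v_1$ is the lowest value in $V$, the trivial segmentation $\{([0,1],v_1)\}$ gives every consumer $c$ her maximal attainable surplus $v(c)-v_1$. No segment can object to it, nor to anything surplus-equivalent to it, because an objection requires some consumer to \emph{strictly} prefer the objecting segment, which is impossible once she already receives her maximal surplus; hence $\{([0,1],v_1)\}$ and all its surplus-equivalent segmentations lie in the core. For the converse, if $S$ is not surplus-equivalent to $\{([0,1],v_1)\}$, then a positive measure of consumers have $CS(c,S)<v(c)-v_1$, so the segment $([0,1],v_1)$ objects to $S$: every consumer weakly prefers it and this positive-measure set strictly prefers it. Thus no such $S$ is in the core, which pins down the core exactly as claimed in the efficient case.

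For the case where $v_1$ is not optimal for $[0,1]$, the goal is to show that every segmentation $S$ admits an objecting segment. The crucial preliminary step is to observe that $S$ cannot use the price $v_1$ in all of its segments: revenue is additive across a partition, so if $v_1$ were optimal for each segment it would be optimal for $[0,1]$, contradicting the hypothesis. Hence $S$ contains a segment $(C',p')$ with $p'>v_1$; since $p'$ is optimal for $C'$ and must match at least the revenue $v_1\mu(C')>0$ of price $v_1$, the coalition $C'$ contains a positive measure of consumers with value at least $p'$. These ``exploited'' consumers have value strictly above $v_1$ yet face a price strictly above $v_1$, so each receives strictly less than her maximal surplus $v(c)-v_1$ in $S$.

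The final step, which I expect to be the main obstacle, is to convert the existence of such exploited consumers into a genuine objecting segment at price $v_1$. I would take $C$ to consist of all value-$v_1$ consumers together with a small positive measure $\delta$ of the exploited high-value consumers identified above. Because there is a positive mass $f(v_1)>0$ of value-$v_1$ consumers, for $\delta$ small enough (e.g. $\delta < v_1 f(v_1)/(v_n-v_1)$) the revenue $v_1(f(v_1)+\delta)$ of price $v_1$ strictly exceeds the revenue of every higher price, so $v_1$ is optimal for $C$ and $(C,v_1)$ is a valid segment. Since $v_1$ delivers the maximal surplus, all members of $C$ weakly prefer $(C,v_1)$ to $S$, while the added exploited consumers strictly prefer it; hence $(C,v_1)$ objects to $S$ and $S$ is not in the core. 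The two delicate points to state carefully are that the exploited consumers exist in an \emph{arbitrary} segmentation (the revenue-additivity argument) and that the small-measure perturbation keeps $v_1$ strictly optimal uniformly over the finitely many alternative prices; both are routine once set up.
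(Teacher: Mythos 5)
Your proposal is correct and follows essentially the same route as the paper's proof: in the efficient case, the trivial segmentation's segment $([0,1],v_1)$ objects to any non-surplus-equivalent segmentation; in the inefficient case, revenue additivity across the partition forces some segment to charge a price above $v_1$, and an objecting segment is built by pooling value-$v_1$ consumers with a small measure of consumers from that high-price segment so that $v_1$ remains optimal. The only (immaterial) difference is that the paper draws the added consumers from those with value exactly $p$, while you allow any consumers of value at least $p'$; both yield the required strict improvement.
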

\begin{proof}
Suppose the market is efficient.
%that $v_1$ is optimal for the set $[0,1]$$ of all consumers.
Then  $([0,1],v_1)$ is a segment and there is no objection to the trivial segmentation $\{([0,1],v_1)\}$ because in any segment $(C,p)$ the price is at least $v_1$.  For the same reason, any segmentation that is surplus-equivalent to $\{([0,1],v_1)\}$ is also in the core.\footnote{There are infinitely many  segmentations that are surplus-equivalent to the trivial segmentation.  For example, we can divide $[0,1]$ into two coalitions $C_1,C_2$ such that the relative measure of all the values is the same in $C_1,C_2$, and $[0,1]$, so $\{(C_1,v_1),(C_2,v_1)\}$ is a surplus-equivalent segmentation.}  Now consider a segmentation $S$ that is not surplus-equivalent to $\{([0,1],v_1)\}$, which means that the price $v$ in some segment is higher than $v_1$. Because $v$ is optimal for the coalition in the segment, the coalition contains a positive measure of consumers with values $v$. And because the price in any segment is at least $v_1$, the segment $([0,1],v_1)$ objects to $S$, so $S$ is not in the core.  We conclude that if the market is efficient, then
%if $v_1$ is optimal for coalition $[0,1]$ then
the core consists of all segmentations that are surplus-equivalent to the trivial segmentation $\{([0,1],v_1)\}$.

Now suppose that the market is not efficient, so $v_1$ is not optimal for the set $[0,1]$ of all consumers.  We first claim that any segmentation includes a segment with a price strictly higher than $v_1$.  Suppose for contradiction that there exists a segmentation $S = \{(C_j,v_1)\}_{j=1,\ldots,k}$.  Because $v_1$ is optimal for $C_j$, for any $p' \in V$ we have
\begin{align*}
    v_1 \sum_{i: v_i \geq v_1} f^{C_j}(v_i) \geq p' \sum_{i: v_i \geq p'} f^{C_j}(v_i).
\end{align*}
Because the segments $C_1,\ldots,C_k$ partition $[0,1]$, we have $\sum_{j=1}^k f^{C_j}(v_i) = f(v_i)$ for each $v_i$.  Summing the above inequality over all $j$, we have
\begin{align*}
    v_1 \sum_{i: v_i \geq v_1} f(v_i) \geq p' \sum_{i: v_i \geq p'} f(v_i).
\end{align*}
This is a contradiction to the assumption that $v_1$ is not optimal for coalition $[0,1]$. We conclude that any segmentation $S$ has a segment $(C,p)$ with $p > v_1$.

Take a segmentation $S$ and a segment $(C,p)$ with $p > v_1$. Because $p$ is optimal for $C$, $C$ contains a positive measure of consumers with value $p$. 
%there is a positive measure of consumers with some value $v_i$  in $C$ whose surplus $v_i - p$ is strictly lower than $v_i - v_1$.
Consider a coalition $C'$ that consists of a positive measure of consumers with value $p$ from $C$ and a positive measure of consumers with value $v_1$ (from any segment).  If $f^{C'}(p)$ is small enough relative to $f^{C'}(v_1)$, then price $v_1$ is optimal for $C'$, so $(C',v_1)$ is a segment.  The surplus of value $p$ consumers in $C'$ is $p - v_1>0$, whereas their surplus in $S$ is zero. The surplus of consumers with value $v_1$ in any segment is zero.  Therefore, segment $(C',v_1)$ objects to $S$,  so $S$ is not in the core.  Since $S$ was any segmentation, the core is empty.
\end{proof}

Even though the core may be empty, we saw that stable segmentations always exist. However, it is not immediately obvious that stability is a less demanding notion than the core.  It is in principle possible that for a segmentation $S$ in the core there is another segmentation $S'$ that is not surplus-equivalent to $S$ such that neither segmentation contains a segment that objects to the other segmentation. 
%that is not equivalent to $S$ and contains no objecting segment to $S$ but such that $S$ also contains no objecting segment to $S'$, so $S$ is not stable.
We show that stability in fact generalizes the the core by showing that the two solution concepts coincide when the core is non-empty.

\begin{proposition}\label{prop: core empty then = stable}
If the core is non-empty, then it is equal to the set of all stable segmentations.  
\end{proposition}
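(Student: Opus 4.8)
The plan is to leverage the characterization of the non-empty core from \autoref{thm: core iff efficient single product} together with the Pareto-dominance criterion of \autoref{Prop: dominated then unstable} and the main characterization in \autoref{conjecture: stable with a single product}. Assume the core is non-empty. By \autoref{thm: core iff efficient single product} the market is efficient, so $v_1$ is optimal for $[0,1]$, the trivial segmentation $T=\{([0,1],v_1)\}$ is well defined, and the core consists precisely of the segmentations that are surplus-equivalent to $T$. The single observation driving the whole argument is that, because $v_1=\min V$ is the lowest price any segment can carry, $T$ gives every consumer her maximal attainable surplus: $CS(c,T)=v(c)-v_1\geq CS(c,S)$ for every consumer $c$ and every segmentation $S$ (any segment price $p$ satisfies $p\geq v_1$, so $CS(c,S)=\max\{v(c)-p,0\}\leq v(c)-v_1$).

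First I would show that every stable segmentation lies in the core. Take a stable $S$ and suppose toward a contradiction that it is not surplus-equivalent to $T$. Since $CS(c,T)\geq CS(c,S)$ for all $c$ by the observation above, the failure of surplus-equivalence means that $CS(c,T)>CS(c,S)$ on a positive-measure set, so $T$ Pareto dominates $S$. By \autoref{Prop: dominated then unstable}, $S$ is not stable, a contradiction. Hence $S$ is surplus-equivalent to $T$ and therefore belongs to the core.

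Next I would prove the reverse inclusion, that every core segmentation is stable. If $S$ is in the core it is surplus-equivalent to $T$, so almost every consumer faces price $v_1$ in $S$; consequently the induced canonical segmentation of $S$ is the single segment $\{([0,1],v_1)\}$. This segmentation is efficient, since $v_1=\underline{v}([0,1])$, and it is vacuously saturated, since it contains no two segments with distinct prices to which the saturation condition could apply. By \autoref{conjecture: stable with a single product}, $S$ is stable.

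The two inclusions establish the claim. The substantive step is the first inclusion, and the main (though mild) obstacle is recognizing that efficiency of the market collapses the core characterization into a statement about a single Pareto-maximal segmentation, after which \autoref{Prop: dominated then unstable} does the work directly. The reverse inclusion is then routine, requiring only the observation that a single-segment canonical representation satisfies efficiency and saturation trivially.
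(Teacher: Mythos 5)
Your proposal is correct, but it proves the inclusion ``core $\subseteq$ stable'' by a genuinely different route than the paper. For the other inclusion (``stable $\subseteq$ core'') you and the paper make essentially the same argument in different clothing: the paper notes that no segmentation blocks the trivial segmentation $T=\{([0,1],v_1)\}$ because $v_1$ is the lowest price any consumer ever faces, so any segmentation not surplus-equivalent to $T$ fails to block $T$ and is unstable; your observation that $T$ Pareto dominates every non-surplus-equivalent segmentation, combined with \autoref{Prop: dominated then unstable}, is exactly that reasoning packaged in the lemma. The real divergence is in showing that core segmentations are stable. The paper verifies blocking directly: a core segmentation consists only of segments with price $v_1$, any non-surplus-equivalent segmentation $S'$ offers a positive measure of consumers with values above $v_1$ a price above $v_1$, and whichever price-$v_1$ segment contains a positive measure of these consumers objects to $S'$. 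You instead invoke \autoref{conjecture: stable with a single product}: the induced canonical segmentation of a core segmentation is the trivial one, which is efficient and vacuously saturated. Your route is shorter and makes the logical dependence on the main characterization explicit; the paper's is self-contained and needs none of the (much harder) sufficiency direction of \autoref{conjecture: stable with a single product}. One step you state as immediate deserves a line of justification: surplus-equivalence to $T$ pins down the price $v_1$ only for consumers with values above $v_1$, since value-$v_1$ consumers get zero surplus at any price, so to conclude that almost every consumer faces $v_1$ you must add that a positive-measure segment with price $p>v_1$ would, by optimality of $p$, contain a positive measure of consumers with values at least $p$, and these consumers would earn strictly less than in $T$, a contradiction. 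This is the same observation the paper makes inside the proof of \autoref{thm: core iff efficient single product}, so it is a compression, not a gap.
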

\begin{proof}
Recall from \autoref{thm: core iff efficient single product} that if the core is non-empty, then it consists of the trivial segmentation $\{([0,1],v_1)\}$ and all its surplus-equivalent segmentations.  These segmentations are clearly stable because any such segmentation $S$ only contains segments of the form $(C,v_1)$, so in any non-surplus-equivalent segmentation a positive measure of consumers are offered a price higher than $v_1$, and then there is a segment in $S$ that objects to the other segmentation. Any segmentation that is not surplus equivalent to the trivial segmentation is not in the core because $v_1$ is the lowest price that any consumer faces in any segmentation, so no segmentation blocks the trivial segmentation.
\end{proof}

\subsection{Relationship to the stable set}
Our notion of stability is related to the notion of a \emph{stable set} from \cite{MoV53}.  The stable set is defined for any cooperative game; we present its application to our game.  Notice that whereas the stability notion of \cite{MoV53}, stated below, is a property of a \emph{set} of segmentations, our notion of stability is a property of a single segmentation. 

\begin{definition}[Stable set, \citealp{MoV53}]
		A \emph{set} of segmentations $\mathcal{S}$ is a {stable set} if it satisfies the following two properties:		\begin{enumerate}
				\item {Internal Stability:} For any $S \in \mathcal{S}$, no $S' \in \mathcal{S}$ blocks $S$.
				\item {External Stability:} For any $S \notin \mathcal{S}$, some $S'  \in \mathcal{S}$ blocks $S$.
			\end{enumerate}
\end{definition}

If a segmentation $S$ is stable, then the set of all segmentations that are surplus-equivalent to $S$ is a stable set.  This is easy to see: internal stability is trivially satisfied because a segmentation does not block a surplus-equivalent segmentation, and external stability is satisfied by definition of stability.  Because stable segmentations always exist, stable sets exist in our setting. This is noteworthy because stable sets do not exist for some cooperative games. Moreover, even when stable sets exist, they may necessarily contain multiple elements.  In contrast, \autoref{prop: stable sets} shows that any stable set in our setting contains an essentially unique element in the sense that it consists of all segmentations that are surplus-equivalent to some segmentation $S$.

We point out that the set of segmentations that are surplus-equivalent to $S$ may be a stable set even if $S$ is not stable.  This is because stability requires that a \emph{single} segmentation block any other non-surplus-equivalent segmentation; for the set of segmentations that are surplus-equivalent to $S$ to be a stable set, on the other hand, requires that any segmentation that is not surplus-equivalent to $S$ be blocked by \emph{some} segmentation that is surplus-equivalent to $S$. It may be that $S$ does not block $S'$ but a segmentation that is surplus-equivalent to $S$ does.  To see this, suppose that $S$ is canonical but not stable, and consider another segmentation $S'$ that is not blocked by $S$.  Take a segment $(C,p)$ in $S$.  Since $(C,p)$ does not object to $S'$, $C$ may contain some consumers who prefer $S$ to $S'$ and some consumers who prefer $S'$ to $S$.  If coalition $C'\subseteq C$ is such that $(C',p)$ and $(C \backslash C',p)$ are segments and we replace $(C,p)$ with $(C',p)$ and $(C \backslash C',p)$, it could be that $(C',p)$ objects to $S'$, yielding a segmentation that is surplus-equivalent to $S$ and blocks $S'$.  The following example illustrates this.

\begin{example}\label{example 6}
        There are three values, $1,2,3$, with measures $\tfrac{6}{21},\tfrac{4}{21},\tfrac{11}{21}$, respectively, as shown in \autoref{fig:example 6}, and a segmentation $S = \{(C_1,1),(C_2,2)\}$ with $C_1 = [0,\tfrac{6}{21}) \cup [\tfrac{18}{21},1]$  and $C_2 = [\tfrac{6}{21},\tfrac{18}{21})$.
\begin{figure}
    \centering
    		\begin{tikzpicture}[scale=3.5, ultra thick]

	\draw (0,0) node[left]{Consumers} -- (1,0);
	\draw (0,.05) -- (0,-0.05) node[below]{$0$};
	\draw (1,.05) -- (1,-0.05) node[below]{$1$};
	\draw (0,.25) node[left]{Values};
	%\draw (0,.5) node[left]{Measures};
	
	\draw[loosely dotted, thick] (6/21,0) node[below, yshift=-3]{\footnotesize $\frac{6}{21}$} -- (6/21,.25);
	\draw[loosely dotted, thick] (10/21,0) node[below, yshift=-3]{\footnotesize $\frac{10}{21}$} -- (10/21,.25);
	
	\draw (3/21,0.25) node{$1$};
	\draw (8/21,0.25) node{$2$};
	\draw (15.5/21,0.25) node{$3$};
	
	%\draw (.2,0.5) node{$0.4$};
	%\draw (.7,0.5) node{$0.6$};
	
	\draw (0,-.3) node[left]{$C_1$};
	\draw (0,-.3) -- (6/21,-.3);
	\draw (0,-.25) -- (0,-0.35) node[below]{$0$};
	%\draw[blue] (2/3,-.35) -- (2/3,-0.45) node[below]{$\frac{2}{3}$};	
	\draw (6/21-0.01,-.255) arc (40:-40:0.07) node[below]{\footnotesize $\frac{6}{21}$};
	\draw (18/21,-.25) -- (18/21,-.35) node[below]{\footnotesize $\frac{18}{21}$};
	\draw (18/21,-.3) -- (1,-.3);
	\draw (1,-.25) -- (1,-.35) node[below]{$1$};	
	
	\draw (0,-.6) node[left]{$C_2$};
	\draw (6/21,-.6) -- (18/21,-.6);
	\draw (6/21,-.55) -- (6/21,-0.65) node[below]{\footnotesize $\frac{6}{21}$};
	%\draw[blue] (2/3,-.35) -- (2/3,-0.45) node[below]{$\frac{2}{3}$};	
	\draw (18/21-0.01,-.555) arc (40:-40:0.07) node[below]{\footnotesize $\frac{18}{21}$};

	\draw (0,-.9) node[left]{$C'_1$};
	\draw (0,-.9) -- (7/21,-.9);
	\draw (0,-.85) -- (0,-0.95) node[below]{$0$};
	%\draw[blue] (2/3,-.35) -- (2/3,-0.45) node[below]{$\frac{2}{3}$};	
	\draw (7/21-0.01,-.855) arc (40:-40:0.07) node[below]{\footnotesize $\frac{7}{21}$};
	\draw (18/21,-.85) -- (18/21,-.95) node[below]{\footnotesize $\frac{18}{21}$};
	\draw (18/21,-.9) -- (1,-.9);
	\draw (1,-.85) -- (1,-.95) node[below]{$1$};	
	
	\draw (0,-1.2) node[left]{$C'_2$};
	\draw (7/21,-1.2) -- (18/21,-1.2);
	\draw (7/21,-1.15) -- (7/21,-1.25) node[below]{\footnotesize $\frac{7}{21}$};
	%\draw[blue] (2/3,-.35) -- (2/3,-0.45) node[below]{$\frac{2}{3}$};	
	\draw (18/21-0.01,-1.155) arc (40:-40:0.07) node[below]{\footnotesize $\frac{18}{21}$};
	
		\draw (0,-1.5) node[left]{$C''_1$};
	\draw (0,-1.5) -- (6/21,-1.5);
	\draw (0,-1.45) -- (0,-1.55) node[below]{$0$};
	%\draw[blue] (2/3,-.35) -- (2/3,-0.45) node[below]{$\frac{2}{3}$};	
	\draw (6/21-0.01,-1.455) arc (40:-40:0.07) node[below]{\footnotesize $\frac{6}{21}$};
	\draw (18/21,-1.45) -- (18/21,-1.55) node[below]{\footnotesize $\frac{18}{21}$};
	\draw (18/21,-1.5) -- (1,-1.5);
	\draw (1,-1.45) -- (1,-1.55) node[below]{$1$};	
	
	\draw (0,-1.8) node[left]{$C''_2$};
	\draw (6/21,-1.8) -- (7/21,-1.8);
	\draw (6/21,-1.75) -- (6/21,-1.85) node[below,xshift=-3.5]{\footnotesize $\frac{6}{21}$};
	%\draw[blue] (2/3,-.35) -- (2/3,-0.45) node[below]{$\frac{2}{3}$};	
	\draw (7/21-0.01,-1.755) arc (40:-40:0.07) node[below,xshift=3.5,yshift=-.5]{\footnotesize $\frac{7}{21}$};
	
    \draw (16/21,-1.8) -- (18/21,-1.8);
	\draw (16/21,-1.75) -- (16/21,-1.85) node[below,xshift=-3]{\footnotesize $\frac{16}{21}$};
	%\draw[blue] (2/3,-.35) -- (2/3,-0.45) node[below]{$\frac{2}{3}$};	
	\draw (18/21-0.01,-1.755) arc (40:-40:0.07) node[below,xshift=3,yshift=-.5]{\footnotesize $\frac{18}{21}$};
	
	\draw (0,-2.1) node[left]{$C'''_2$};
	\draw (7/21,-2.1) -- (16/21,-2.1);
	\draw (7/21,-2.05) -- (7/21,-2.15) node[below]{\footnotesize $\frac{7}{21}$};
	%\draw[blue] (2/3,-.35) -- (2/3,-0.45) node[below]{$\frac{2}{3}$};	
	\draw (16/21-0.01,-2.055) arc (40:-40:0.07) node[below,yshift=-.5]{\footnotesize $\frac{16}{21}$};	

		\end{tikzpicture}
		\caption{\autoref{example 6}}
		\label{fig:example 6}
\end{figure}
\end{example}
Segmentation $S$ is not stable because it is not saturated.  This is because we can add some consumers with value $2$ from $C_2$ to $C_1$ without increasing the price $p=1$ in the first segment.  It is also easy to see directly that $S$ is not stable. For example, segmentation $S' = \{(C'_1,1),(C'_2,3)\}$ with coalitions $C'_1 = [0,\tfrac{7}{21}) \cup [\tfrac{18}{21},1]$  and $C'_2 = [\tfrac{7}{21},\tfrac{18}{21})$ shown in \autoref{fig:example 6} is not blocked by $S$.  Segment $(C_1,1)$ in $S$ does not object to $S'$ because all consumers in $C_1$ are indifferent between the two segmentations.  Segment $(C_2,2)$ does not object to $S'$ because the value 2 consumers who join the first segment in $S'$ strictly prefer $S'$ to $S$. However, segmentation $S'$ is blocked by segmentation $S'' = \{(C''_1,1),(C''_2,2),(C'''_2,2)\}$ with coalitions $C''_1 = [0,\tfrac{6}{21}) \cup [\tfrac{18}{21},1]$, $C''_2 = [\tfrac{6}{21},\tfrac{7}{21}) \cup [\tfrac{16}{21},\tfrac{18}{21})$, and $C'''_2 = [\tfrac{7}{21},\tfrac{16}{21})$, which is surplus-equivalent to $S$.  In particular, segment $(C'''_2,2)$ objects to $S'$ because the consumers in $C'''$ face price $2$ in $S''$ and price $3$ in $S'$.  

The proposition below characterizes stable sets and shows that in this example the set of all segmentations that are surplus-equivalent to $S$ is in fact a stable set.  To state the proposition, we first define two weak notions of objection and blocking.

\begin{definition}[Weak Objections]
A segment $({C},p)$ weakly objects to a segmentation $S$ if $CS(c,p) > CS(c,S)$ for a positive measure of consumers $c$ in $C$ and $CS(c,p) \geq CS(c,S)$ for a positive measure of consumers $c$ in $C$ whose value $v$ is an optimal price for $C$.
\end{definition}

Any objection is also a weak objection.  To see this, observe that both objections and weak objections require that some consumers strictly prefer the segment to the segmentation.  But objections also require that all consumers in the segment weakly prefer the segment.  Weak objections do not require this for consumers whose value is not an optimal price for the segment.  And for values that are optimal prices for the segment, only some consumers with such values are required to prefer the segment.  We now define the corresponding notion of weak blocking.

\begin{definition}[Weak Blocking]
A segmentation $S$ weakly blocks a segmentation $S'$ if there exists a segment $(C,p)$ in $S$ that weakly objects to $S'$.
\end{definition}

Segmentation $S$ in \autoref{example 6} weakly blocks (but does not block) segmentation $S'$ because segment $(C_2,2)$ weakly objects to $S'$: consumers with value 3 in $C_2$ strictly prefer the segment to $S'$, consumers with value 2 in $C_2\cap C'_2$ weakly prefer the segment to $S'$, and 2 is an optimal price for $C_2$.\footnote{$(C_2,2)$ does not object to $S'$ because consumers in $C_2\backslash C'_2$ face a price of 2 in $(C_2,2)$ and a price of 1 in $S'$.}

The following proposition, whose proof is in \autoref{app: stable sets}, characterizes the stable sets.

\begin{proposition}\label{prop: stable sets}
A set of segmentations $\mathcal{S}$ is a stable set if and only if it comprises all the segmentations that are surplus-equivalent to some segmentation $S$ that weakly blocks any segmentation $S'$ that is not surplus-equivalent to $S$.
%$\mathcal{S} = \{S'' | S'' \approx S\}$ where $S$ is some canonical segmentation that weakly blocks any $S' \not\approx S$.
\end{proposition}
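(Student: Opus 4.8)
The plan is to reduce the whole proposition to a single bridging lemma and then read off both directions almost formally. Write $S \sim S'$ for surplus-equivalence and $[S]$ for the class of all segmentations surplus-equivalent to $S$; recall that $[S]$ is determined by the induced canonical segmentation of $S$, and that whether a segment objects to a segmentation depends only on that segmentation's almost-everywhere surplus profile. The key claim is a \emph{bridge lemma}: for any $S$ and $T$, the segmentation $S$ weakly blocks $T$ if and only if some $S^{\dagger} \sim S$ blocks $T$. For the forward direction I would take a segment $(C,p)$ of $S$ that weakly objects to $T$ and assemble a blocking $S^{\dagger}$ by carving out a new price-$p$ segment $C^{\dagger}$. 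The consumers in $C$ who strictly prefer $(C,p)$ to $T$ have value above $p$ and pay exactly $p$ in $S$, so their surplus is pinned and they can be repackaged only at price $p$; I put a positive measure of them into $C^{\dagger}$. To keep $p$ optimal for $C^{\dagger}$ I add as a low anchor the weakly-preferring consumers whose value is an optimal price for $C$, furnished by the second clause of weak objection (using that an optimal price of a coalition is always attained by a positive measure of its members), and tune the two measures so that $p$ is exactly optimal. The remaining price-$p$ consumers of $C$ (in particular any who strictly prefer $T$) and all other consumers are repackaged at surplus-preserving prices, so that $S^{\dagger} \sim S$ while $(C^{\dagger},p)$ objects to $T$.

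For the converse, suppose $S^{\dagger} \sim S$ has a segment $(C',q)$ objecting to $T$. The consumers of $C'$ who strictly prefer pay exactly $q$ in $S$ as well, so they sit inside the price-$q$ segment of the canonical form of $S$; I would argue that this segment weakly objects to $T$, the first clause being immediate and the second following because $q$, being optimal, is attained there by a positive measure of surplus-zero consumers who weakly prefer $T$. Granting the bridge lemma, the ``if'' direction is routine. If $\mathcal S = [S]$ for an $S$ that weakly blocks every non-equivalent segmentation, then internal stability is automatic, since surplus-equivalent segmentations give almost every consumer the same surplus and hence contain no objecting segment; and for external stability, any $T \notin \mathcal S$ is not surplus-equivalent to $S$, so $S$ weakly blocks $T$, and the bridge lemma converts this into a genuine block of $T$ by a member of $[S] = \mathcal S$.

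For the ``only if'' direction, let $\mathcal S$ be an arbitrary stable set and fix $S \in \mathcal S$. First, $[S] \subseteq \mathcal S$: if some $S' \sim S$ lay outside $\mathcal S$, external stability would give a member of $\mathcal S$ blocking $S'$, but blocking depends only on the surplus profile, so that member would also block $S$, contradicting internal stability. Second, every element of $\mathcal S$ is efficient: an inefficient element is Pareto dominated by an efficient $S^{+}$, and whether $S^{+} \in \mathcal S$ (then $S^{+}$ blocks it directly) or $S^{+} \notin \mathcal S$ (then external stability gives $\hat S \in \mathcal S$ blocking $S^{+}$, whose objecting segment, being still better than the inefficient element, blocks it too), one gets a violation of internal stability. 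Third, $\mathcal S$ is a single class: if $S_2 \in \mathcal S \setminus [S]$ then $[S_2] \subseteq \mathcal S$ as well, so internal stability together with the bridge lemma forces that neither $S$ weakly blocks $S_2$ nor $S_2$ weakly blocks $S$, contradicting the claim $(\star)$ below. Finally, $S$ weakly blocks every non-equivalent $T$: otherwise the bridge lemma yields that no member of $\mathcal S = [S]$ blocks $T \notin \mathcal S$, contradicting external stability.

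What remains is the claim $(\star)$: of any two non-surplus-equivalent segmentations, one weakly blocks the other, which by the second step I only need for efficient segmentations. When one weakly Pareto dominates the other this is easy, since any segment of the dominating segmentation containing a strictly-improved consumer objects outright. The genuine work is the incomparable case, where some consumers strictly prefer $S_1$ and others strictly prefer $S_2$; here I would compare the two efficient allocations at the highest price $v_k$ at which they place different measures of consumers and show that the segmentation placing the larger measure at $v_k$ supplies a segment that weakly objects to the other, the strict preferrers coming from this discrepancy and the optimal-price clause being met because $v_k$ is attained. I expect the two hardest points of the whole proposition to be exactly (i) the forward construction in the bridge lemma, namely keeping the carved segment's price optimal while using only weakly-preferring consumers and correctly discharging the second clause of weak objection (in particular securing a weakly-preferring anchor at the segment's own price $p$), and (ii) verifying the optimal-price clause in the incomparable case of $(\star)$; everything else is bookkeeping around internal and external stability.
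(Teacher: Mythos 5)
Your architecture is close to the paper's (your bridge lemma's forward direction is the paper's external-stability construction, its converse is the crux of the paper's necessity proof), but the bridge lemma is \emph{false} as you state it, and your argument for its converse contains the error. Take $V=\{1,2,4\}$ with masses $4,4,4$ (rescaled to total mass one) and the inefficient canonical segmentation $S=\{(C_1,1),(C_2,2),(C_4,4)\}$, where $C_1$ holds all value-$1$ consumers, $C_2$ holds $3$ units of value-$2$ and $2$ units of value-$4$ consumers, and $C_4$ holds $1$ unit of value-$2$ and $2$ units of value-$4$ consumers. Let $S^{\dagger}\sim S$ carve out a price-$2$ segment $C'$ made of $1$ unit of value-$4$ consumers from $C_2$ and the $1$ unit of value-$2$ consumers from $C_4$ (surplus-equivalence permits this relocation precisely because those value-$2$ consumers get zero surplus either way, i.e.\ because $S$ is inefficient). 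Let $T$ put all value-$1$ consumers, all value-$2$ consumers of $C_2$, and the value-$4$ consumers of $C_2\setminus C'$ into a price-$1$ segment, and everyone else into a price-$4$ segment; both prices are optimal. Then $(C',2)$ objects to $T$, so $S^{\dagger}$ blocks $T$, yet no segment of $S$ weakly objects to $T$: $(C_1,1)$ and $(C_4,4)$ have no strict preferrers, and in $(C_2,2)$ the only optimal price is $2$ while \emph{every} value-$2$ consumer of $C_2$ faces price $1$ in $T$, so the second clause of weak objection fails. This pinpoints the mistake in your converse: surplus-zero consumers do \emph{not} automatically weakly prefer the segment to $T$ --- they may earn strictly positive surplus in $T$ (exactly what happens to the value-$2$ consumers of $C_2\cap C'_1$ in \autoref{example 6}). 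The converse is true once $S$ is efficient, because then the weakly-preferring value-$q$ consumers \emph{of the objecting segment} $C'$ are forced by surplus-equivalence plus efficiency to lie in $S$'s own price-$q$ segment (and, on the reading of the definition in which every optimal price of $C_q$ needs weakly-preferring consumers, one additionally needs the paper's summation argument showing that every optimal price of $C_q$ is optimal for $C'$). Since you invoke the converse only for members of the stable set, which you correctly show are efficient, your plan is repairable --- but the step you would need is exactly the one your sketch gets wrong.

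Two further steps fail as written. First, in the forward direction, anchoring $C^{\dagger}$ only at the segment's own price $p$ does not keep $p$ optimal for the complement: if $C$ has a second optimal price $v^{*}$ above the value $v'$ of your strict preferrers, then $C\setminus C^{\dagger}$ has the same revenue as $C$ at $v^{*}$ but strictly less at $p$, so $p$ is suboptimal there, and the leftover high-value consumers of $C$ (whose surplus pins them to price $p$) cannot be repackaged at all, since any partition of $C\setminus C^{\dagger}$ into price-$p$ segments would, by summing the optimality inequalities, require $p$ to be optimal for their union. This is why the paper draws the anchor from \emph{all} optimal-price values of $C$ in equal-revenue proportions, and why the second clause of weak objection must furnish weakly-preferring consumers at every such value. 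Second, your claim $(\star)$ is true for efficient segmentations, but your proof looks at the wrong end: at the \emph{highest} differing price the discrepancy consumers typically face \emph{lower} prices in the other segmentation and strictly prefer it. In \autoref{example 4}, the stable segmentation and the maximal equal-revenue segmentation differ at price $3$, and the stable one places the larger mass there, yet its price-$3$ segment has no strict preferrers (all value-$3$ consumers strictly prefer the other segmentation); its weak objection comes from the price-$1$ segment instead. The paper compares at the \emph{lowest} differing price, where all cheaper segments coincide, so the discrepancy consumers necessarily face weakly higher prices in the other segmentation and one gets an outright objection, not merely a weak one.
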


The canonical segmentation $S$ in \autoref{example 6} weakly blocks any non-surplus-equivalent segmentation, so the set of segmentations that are surplus-equivalent to $S$ is a stable set.  To see that $S$ weakly blocks any non-surplus-equivalent segmentation, consider some segmentation $S'$ that is not weakly blocked by $S$.  Suppose without loss of generality that $S'$ is canonical, so $S' = \{(C'_1,1),(C'_2,2),(C'_3,3)\}$.  Because all consumers with value 1 have zero surplus, if some consumers with value 3 in $C_1$ strictly preferred $S$ to $S'$, then $(C_1,1)$ would weakly object to $S'$.  Thus, the consumers with value 3 in $C_1$ are in $C'_1$.  It is impossible for all value $2$ consumers in $C_2$ to be in $C'_1$, because then the revenue from price 2, $2\cdot (\frac{4}{21}+\frac{11}{21})$, would be strictly higher than the revenue from price 1, which is at most $\frac{6}{21} + \frac{4}{21}+\frac{11}{21}$.  Thus, some consumers with value $2$ in $C_2$ weakly prefer $S$ to $S'$.  This implies that the consumers with value $3$ in $C_2$ must be in $C'_1 \cup C'_2$, otherwise some such consumers, those in $C'_3$, would strictly prefer $S$ to $S'$ and then $(C_2,2)$ would weakly object to $S'$.  Therefore $C'_3$ is empty.  Let $\delta_1\geq 0$ be the measure of value 1 consumers in $C'_2$, and let $\delta_2,\delta_3\geq0$ be the measures of value 2 and value 3 consumers from $C_2$ that are in $C'_1$.  For price 1 to be optimal for $C'_1$, the revenue from this price, $\frac{6}{21} - \delta_1 + \delta_2+\frac{3}{21} + \delta_3$, must be no lower than the revenue from price 3, $3 \cdot (\frac{3}{21}+\delta_3)$, which means that $-\delta_1+\delta_2+\delta_3 \geq 3\delta_3$.  Similarly, for price $2$ to be optimal for $C'_2$ we must have $2\cdot(\frac{4}{21}-\delta_2+\frac{8}{21}-\delta_3) \geq 3\cdot(\frac{8}{21}-\delta_3)$, which means that $3\delta_3 \geq 2(\delta_2+\delta_3)$.  These two inequalities hold if and only if $\delta_1 = \delta_2 = \delta_3 = 0$.  We therefore have that $C_1 = C'_1$ and $C_2 = C'_2$, so $S'$ is surplus-equivalent to $S$.

\subsection{Relationship to farsighted stability}\label{sec: farsighted stability}
We now discuss the connection between our notion of stability and two notions motivated by farsighted stability: the Harsanyi stable set and the Ray and Vohra farsighted stable set (henceforth RV stable set).

Both notions define a stable set as one that satisfies internal and external stability, just like the stable set of \cite{MoV53}. But the notion of blocking used to define internal and external stability is ``farsighted.'' A segmentation blocks another segmentation if there is a sequence of segmentations that begins with the segmentation to be blocked and ends with the blocking segmentation such that each intermediate segmentation contains a coalition that prefers the blocking segmentation to the one that preceded the intermediate segmentation. These objecting coalitions allow the blocking segmentation to be ``reached'' starting from the original segmentation. The two notions differ in what is assumed about the segments along the sequence other than the objecting segments, with the RV stable set assuming a kind of ``coalitional autonomy'' similar to the ``coalitional IR'' that motivates our definition of stability.

Our notion of stability satisfies these two notions which, although differing in general, coincide in our setting. Moreover, although these are set notions, in our setting they are satisfied only by singleton sets. Importantly, however, these notions are not particularly useful in our setting because they are too permissive. More precisely, for each notion we have a weak and a strong version; any segmentation that does not eliminate all consumer surplus satisfies the weak versions, and any Pareto undominated segmentation satisfied the strong versions.  We provide the details in \autoref{app: farsighted stability}.

%We make several observations. First, our notion of stability satisfies these two notions. Second, although these notions differ in general, they coincide in our setting.  Third, although these are set notions, in our setting they are satisfied only by singleton sets. Fourth, and most importantly, these notions are not particularly useful in our setting because they are too permissive: any segmentation that does not eliminate all consumer surplus satisfies them. 

\section{Conclusions}\label{sec: conclusion}
We study market segmentation of a monopolistic market when consumers know their value for the product prior to the segmentation of the market. Because different consumers rank the possible segmentations differently, it is not clear which market segmentation would arise. Instead of formulating a specific game to capture the interaction among consumers and the seller that determines the segmentation, we develop a notion of stability that captures a segmentation being immune to deviations to other segmentations. A stable segmentation is one that, for each segmentation considered as a possible deviation, contains a coalition of consumers that object to the deviation. This captures a kind of ``coalitional individuals rationality (IR).''

Our main result characterizes stable segmentations as those that are efficient and saturated, in that enlarging any segment by adding consumers who face higher prices necessarily increases the profit-maximizing price for the segment. We use this characterization to show that stable segmentations always exist by showing that a particular segmentation (the MER) that maximizes average consumer surplus, identified by \cite{BBM15}, is stable. We also show that efficiency and maximizing consumer surplus is neither necessary nor sufficient for a segmentation to be stable. We highlight the separate roles that efficiency and saturation play by showing that efficient segmentation are those that are fragmentation proof, in that they are immune to objections by sets of consumers that are subsets of existing coalitions. The relationship between the various notions is illustrated in \autoref{fig:results summary}. Finally, we show that our framework can be formulated as a cooperative NTU game and that our notion of stability satisfies many existing solution concepts. Applied to our framework, these solution concepts are not particularly useful because they are too permissive, that is, a large set of segmentations satisfy them.

Our results indicate that a monopolist's use of consumer data to segment the market could be considered as a policy tool to overcome the loss of efficiency associated with monopoly pricing. While efficiency is also achieved with first-degree price discrimination, our results show that as long as ``coalitional IR'' is maintained, the resulting efficient segmentation is Pareto undominated and may increase consumer surplus up to the highest amount possible in the ``surplus triangle'' of 
 \citet{BBM15}. Thus, monopolistic price discrimination subject to ``coalitional IR'' can be viewed as a possible alternative or addition to standard anti-trust regulation.

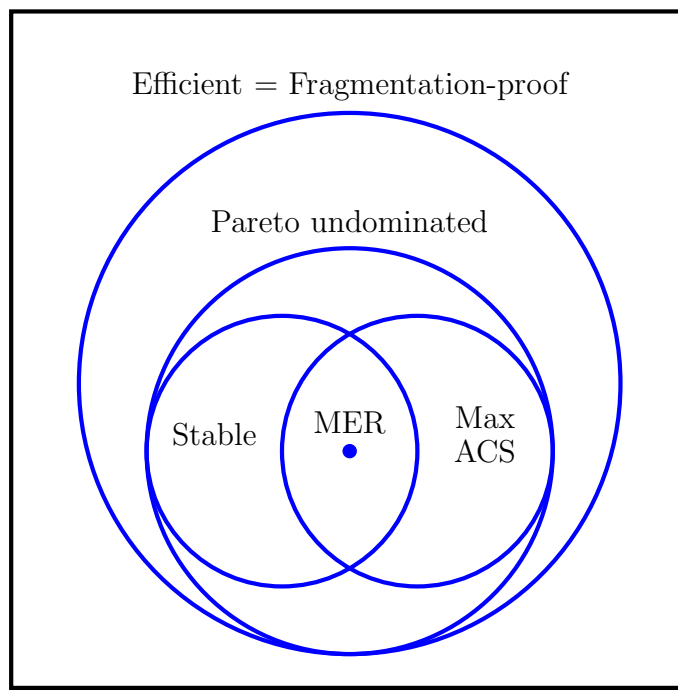
\begin{figure}
    \centering
    	\begin{center}
	\begin{tikzpicture}[scale=4.5, ultra thick]
		\draw (0,0) rectangle (2,2) node[above,xshift=-40]{Segmentations};
		
		\draw[blue] (1,.9) ellipse (.8cm and .8cm);
		\draw (1,1.78) node{Efficient $=$ Fragmentation-proof};
		
		\draw[blue] (1,.7) ellipse (.6cm and .6cm);
		\draw (1,1.38) node{Pareto undominated};
		
		\draw[blue] (.8,.7) ellipse (.4cm and .4cm);
		\draw (.6,.75) node{Stable};
		%\draw (.8,.9) node{\scriptsize efficient $+$ saturated};

		\draw[blue] (1.2,.7) ellipse (.4cm and .4cm);
		\draw (1.4,.8) node{Max};			
		\draw (1.4,.7) node{ACS};			
		
		\draw[blue, fill = blue] (1,.7) ellipse (.015cm and .015cm)  node[above,black,yshift=2]{MER};
	\end{tikzpicture}
\end{center}
    \caption{Summary of the relationship between different notions.}
    \label{fig:results summary}
\end{figure}

%The assumption that \emph{any} coalition of consumers can form a segment can be seen as a richness assumption on the set of verifiable messages that each agent may send to the seller.  In particular, for any consumer and any set that contains her, the consumer can prove to the seller that she belongs to the coalition.  Therefore, any coalition of consumers can form a segment if each of the consumers in any coalition verifiably disclose to the seller that they belong to the coalition.  In many real-world settings, the information that consumers can reveal to the seller may be limited.  Our work abstracts away from such limitations in order to focus on the main economics forces that arise when consumers have control over their data, and we leave investigations of models that explicitly specify what information can be revealed to the seller as future work.

\appendix
\section{Appendix}
\subsection{Proof of \autoref{prop: stable sets}}\label{app: stable sets}
\begin{proof}
To see the necessity of these conditions, consider a stable set $\mathcal{S}$ of segmentations.  We first show that any segmentation in $\mathcal{S}$ is Pareto undominated.  Suppose for contradiction that a segmentation $S$ in $\mathcal{S}$ is Pareto dominated by another segmentation $S'$.  If $S'$ is in $\mathcal{S}$, then internal stability is violated because $S'$ blocks $S$.  If $S'$ is not in $\mathcal{S}$, then, by external stability, there is a segmentation $S''$ in $\mathcal{S}$ that blocks $S'$.  But then $S''$ also blocks $S$, which violates internal stability.  Pareto undominance implies that any segmentation in $\mathcal{S}$ is efficient.

We now show that any two segmentations in $\mathcal{S}$ are surplus-equivalent.  Suppose for contradiction that segmentations $S_1$ and $S_2$ in $\mathcal{S}$ are not surplus-equivalent.  Their induced canonical segmentations $S'_1$ and $S'_2$ are also not surplus-equivalent, so there is a price $p$ and segments $(C_1,p)$ in $S'_1$ and $(C_2,p)$ in $S'_2$ with $C_1 \neq C_2$ (in the ``almost all" sense), where $C_1$ or $C_2$ may be empty.  Suppose without loss of generality that $p$ is the lowest such price, so any consumer in $C_1$ is either in $C_2$ or in a segment of $S'_2$ with a higher price, and similarly any consumer in $C_2$ is either in $C_1$ or in a segment of $S'_1$ with a higher price (up to a set of consumers of measure $0$).  Because $C_1 \neq C_2$, either $C_1 \backslash C_2 $ or $C_2 \backslash C_1$ has positive measure.  Suppose without loss of generality that $C_1 \backslash C_2$ has positive measure.

First observe that $C_1 \backslash C_2$ cannot contain a positive measure of consumers with value $p$. Indeed, such consumers would be in segments of $S'_2$ with prices strictly higher than $p$, so $S'_2$ would not be efficient, contradicting the efficiency of $S_2$. Therefore, $C_1 \backslash C_2$ contains a positive measure of consumers with values higher than $p$.

%First suppose that $C_1 \backslash C_2$ contains only consumers with value $p$.  So $C_2$ has a lower measure of value $p$ consumers, and in addition may have some higher value consumers.  Because price $p$ is optimal for $C_2$ and $C_1 \backslash C_2$ contains only consumers with value $p$, price $p$ must be strictly optimal for $C_1$.  But this means that $S'_1$ is Pareto dominated because for any segment $(C',p')$ in $S'_1$ with $p'>p$, we can add some of the highest value consumers from $C'$ to $C_1$ without increasing the price in either segment.\footnote{Such a segment $(C',p')$ with $p'>p$ exists because $C_2 \backslash C_1$ has positive measure (otherwise segments in $S'_2$ with prices higher than $p$ could only contain the consumers in $C_1 \backslash C_2$, so their optimal price would be $p$, a contradiction), and these consumers must be in segments of $S'_1$ with prices higher than $p$ (as argued in the previous paragraph).}  But we showed above that anט segmentation in $\mathcal{S}$ is Pareto undominated.

%So suppose that $C_1 \backslash C_2$ contains some consumers with values higher than $p$.
%By our choice of $p$, as argued above, these consumers are in segments of $S'_2$ with prices higher than $p$.
Consider any segment $(C'',p)$ in $S_1$ that contains some such consumers, that is, $C''\cap(C_1 \backslash C_2)$ has positive measure. Because $C''\subseteq C_1$, the consumers in $C''$ face prices no lower than $p$ in $S_2$, and the consumers in $C''\cap(C_1 \backslash C_2)$ face prices strictly higher than $p$ in $S_2$. So $S_1$ blocks $S_2$, which contradicts internal stability. 

We have established that $\mathcal{S}$ may only contain segmentations that are surplus-equivalent to a Pareto undominated segmentation $S$.  If some $S'$ that is surplus-equivalent to $S$ is not in $\mathcal{S}$, then no segmentation in $\mathcal{S}$ blocks $S'$ so external stability is violated.  So $\mathcal{S}$ must contain \emph{all} segmentations that are surplus-equivalent to a Pareto undominated segmentation $S$, which we can assume to be canonical without loss of generality. To complete the necessity direction, it remains to show that the canonical segmentation $S$ weakly blocks any non-surplus-equivalent segmentation.

Suppose for contradiction that $S$ does not weakly block some non-surplus-equivalent segmentation $S'$.  Because $\mathcal{S}$ is a stable set and contains all segmentations that are surplus-equivalent to $S$, there is a segmentation $S''$ that blocks $S'$ and is surplus-equivalent to $S$.  Consider a segment $(C'',p)$ in $S''$ that objects to $S'$, and the unique segment $(C,p)$ in $S$ in which the price is $p$.   Because $(C'',p)$ objects to $S'$ and $C'' \subseteq C$, there is a positive measure of consumers in $(C,p)$ that strictly prefer $S$ to $S'$.  Because $S$ does not weakly block $S'$, there exists some optimal price $v$ for $C$ such that all consumers with value $v$ in $C$ strictly prefer $S'$ to $(C,p)$.  We claim that $v$ is also optimal for any segment in $S''$ with price $p$, and therefore for $C''$.  To see this, consider all the segments $(C''_1,p),\ldots,(C''_k,p)$ in $S''$ with price is $p$, so $C$ is the union of all these coalitions, one of which is $C''$.  Because $p$ is optimal for $C''_j$, $j=1,\ldots,k$,  we have $vF^{C''_j}(v) \leq pF^{C''_j}(p)$.  If price $v$ is not optimal for some $C''_j$, then $vF^{C''_j}(v) < pF^{C''_j}(p)$. In this case, summing up over all $j$, we have $vF^{C}(v) < pF^{C}(p)$, which contradicts the optimality of price $p$ for coalition $C$. So $v$ must be optimal for $C''$.  Therefore, $C''$ contains a positive measure of consumers with value $v$.  And because all consumers with value $v$ in $C$ strictly prefer $S'$ to $S$, and $S''$ is surplus-equivalent to $S'$, all these consumers strictly prefer $S'$ to $(C'',p)$ so $(C'',p)$ cannot object to $S'$, a contradiction.

%Indeed, any consumer that is not in $C$ has the same surplus in $S$ and $S'$.  Now consider any segment in $S''$ in which the price is $\underline{v}(C)$.  Any such segment must contain some consumers of value $p$ from $C$.  These consumers strictly prefer $S'$ to $S''$.  So $S''$ does not block $S'$.

To establish sufficiency, consider any canonical segmentation $S = \{(C_1,v_1), \ldots, (C_n,v_n)\}$ that weakly blocks any non-surplus-equivalent segmentation.  The set of segmentations that are surplus-equivalent to $S$ satisfies internal stability because no segmentation blocks a surplus-equivalent segmentation.  For external stability, we show that for any segmentation $S'$ that is not surplus-equivalent to $S$, there is a segmentation $S''$ that is surplus-equivalent to $S$ and blocks $S'$.

Consider the segment $(C,p)$ in $S$ that weakly objects to $S'$.  We will construct a coalition $C'' \subseteq C$ and show that the segmentation $S''$ that is the same as $S$ except that $(C,p)$ is replaced with $(C \backslash C'', p)$ and $(C'',p)$, and is therefore surplus-equivalent to $S$, objects to $S'$.  The construction of $C''$ has two steps.  First, let $C''_1$ be a small coalition that comprises consumers with all the values that are optimal prices for $C$ in proportions that make these values optimal prices for $C''_1$.  That is, $\epsilon = vF^{C''_1}(v)$ for some small $\epsilon$ and all $v$ that are optimal prices for $C$.  For the second step, let $v'$ be such that a positive measure of consumers in $C$ with value $v'$ strictly prefer $(C,p)$ to $S'$.  For some $\delta>0$, add to $C''_1$ a measure $\delta$ of consumers in $C$ with value $v'$ that strictly prefer $(C,p)$ to $S'$, and remove from $C''_1$ the same measure $\delta$ of consumers with the highest value in $C''_1$ that is at most $v'$ (a positive measure of these consumers exists because some consumers in $C''_1$ have value $p$ and $p < v'$, otherwise consumers with value $v'$ have zero surplus in $S$ so do not strictly prefer $(C,p)$ to $S'$).  The resulting coalition is $C''$, which, if $\delta$ is small relative to $\epsilon$, satisfies that $\epsilon = vF^{C''}(v)$ for all prices $v$ that are optimal for $C$.  So if $\delta$ is small relative to $\epsilon$, then $C''$ has the same set of optimal prices as $C''_1$, and $(C'',p)$ is a segment.  Similarly, if $\epsilon$ and $\delta$ are small enough, then $C \backslash C''$ has the same set of optimal prices as $C$, so $(C \backslash C'', p)$ is a segment.  By construction, $(C'',p)$ objects to $S'$, so $S''$, which is surplus-equivalent to $S$, blocks $S'$.
\end{proof}

\subsection{Appendix for \autoref{sec: farsighted stability}}\label{app: farsighted stability}
To apply the Harsanyi and RV stable sets to our setting, we need to address two technical issues.
%point out that they do not apply directly to our setting for two reasons.
First, these notions are defined for a finite number of players.  Second, they involve a definition of objection that requires a strict improvement for all members of the objecting coalition.  In our setting, consumers with the lowest value in a coalition have zero surplus, so these notions become trivial (every segmentation satisfies them) if we require a strict improvement for every consumer.  We define modified versions of these notions below, allowing for a continuum of players and weak improvements.  Because farsighted stability considers sequences of deviations, there are two ways to allow for weak improvements.  We therefore define two versions of each solution concept.

\begin{definition}
   A segmentation $S$ Harsanyi blocks a segmentation $S'$ if there is a sequence $S^0 = S', S^1, \ldots, S^n = S$ of segmentations and a sequence $(C^1,p^1),\ldots,(C^n,p^n)$ of segments such that for $i=1 \ldots n$, $(C^i,p^i) \in S^i$ and $CS(c,S^{i-1}) \leq CS(c,S)$ for all consumers $c \in C^i$, with a strict inequality for a positive measure of consumers $c \in C^i$ for some $i$.  If, in addition, $CS(c,S^{i-1}) < CS(c,S)$ for a positive measure of consumers $c \in C^i$ for \emph{all} $i=1 \ldots n$, we say that $S$ strongly Harsanyi blocks $S'$.
\end{definition}

\begin{definition}
		A \emph{set} of segmentations $\mathcal{S}$ is a (strong) {Harsanyi stable set} if it satisfies the following two properties:		\begin{enumerate}
				\item {Internal Stability:} For all $S \in \mathcal{S}$, there exists no $S' \in \mathcal{S}$ that (strong) Harsanyi blocks $S$.
				\item {External Stability:} For all $S \notin \mathcal{S}$, there exists $S'  \in \mathcal{S}$ that (strong) Harsanyi blocks $S$.
			\end{enumerate}
\end{definition}

\begin{definition}\label{def: RV blocking}
   A segmentation $S$ RV blocks a segmentation $S'$ if there is a sequence $S^0 = S', S^1, \ldots, S^n = S$ of segmentations and a sequence $(C^1,p^1),\ldots,(C^n,p^n)$ of segments such that for $i=1 \ldots n$, $(C^i,p^i) \in S^i$ and $(C,p) \in S^i$ whenever $(C,p) \in S^{i-1}$ and $C \cap C^i = \emptyset$, and $CS(c,S^{i-1}) \leq CS(c,S)$ for all consumers $c \in C^i$, with a strict inequality for a positive measure of consumers $c \in C^i$ for some $i$.  If, in addition, $CS(c,S^{i-1}) < CS(c,S)$ for a positive measure of consumers $c \in C^i$ for \emph{all} $i=1 \ldots n$, we say that $S$ strongly RV blocks $S'$.
\end{definition}

\begin{definition}
		A \emph{set} of segmentations $\mathcal{S}$ is a (strong) {RV stable set} if it satisfies the following two properties:		\begin{enumerate}
				\item {Internal Stability:} For all $S \in \mathcal{S}$, there exists no $S' \in \mathcal{S}$ that (strong) RV blocks $S$.
				\item {External Stability:} For all $S \notin \mathcal{S}$, there exists $S'  \in \mathcal{S}$ that (strong) RV blocks $S$.
			\end{enumerate}
\end{definition}

For the following characterization of Harsanyi and RV stable sets we denote by $ACS(S)$ the average consumer surplus in segmentation $S$.

\begin{proposition}\label{prop: characterizing HSS and RVSS}
    The following are equivalent for any set of segmentations $\mathcal{S}$:
    \begin{itemize}
        \item $\mathcal{S}$ is a Harsanyi stable set
        \item $\mathcal{S}$ is a RV stable set
        \item $\mathcal{S} = \{S\}$ for some $S$ with $ACS(S) > 0$.
    \end{itemize}
\end{proposition}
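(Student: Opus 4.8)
The plan is to reduce the three-way equivalence to two facts about the weak blocking relations and then read off the stable-set conditions from them. The first is a permissiveness lemma: for every segmentation $S$ with $ACS(S)>0$ and every segmentation $S'\neq S$, $S$ RV blocks $S'$. Since every RV-blocking chain also meets the strictly weaker requirements of Harsanyi blocking—Harsanyi blocking simply drops the persistence condition on untouched segments—this immediately gives that $S$ Harsanyi blocks $S'$ as well. The second is an impotence observation: if $ACS(S)=0$, so that $CS(c,S)=0$ for almost every $c$, then $S$ blocks no segmentation under either notion, because blocking requires a positive measure of movers with $CS(c,S^{i-1})<CS(c,S)$, which is impossible when $CS(\cdot,S)\equiv 0$.

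The heart of the argument, and the step I expect to be hardest, is the permissiveness lemma for RV blocking, where coalitional sovereignty forces every untouched segment to persist. I would build the chain from $S'$ to $S$ in two phases, exploiting that at each move a coalition compares its pre-move payoff only to the \emph{final} payoff in $S$. In the erosion phase I drive every consumer to zero surplus. The key device is that the consumers whose value equals a segment's price have zero surplus and are therefore always willing to move; removing all of them from a segment $(D,q)$ renders $q$ suboptimal (the optimal price in $V$ always has positive mass in the coalition), so re-forming the residual at its new optimal price $q'>q$ zeroes out the consumers of value $q'$ and weakly lowers every remaining surplus. Iterating within a segment (the optimal price strictly increasing at each step) and handling segments one at a time while the others persist reaches a zero-surplus segmentation $Z$ in finitely many moves, each mover trivially satisfying $CS(c,S^{i-1})=0\le CS(c,S)$. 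In the building phase I introduce the segments $(C_1,p_1),\ldots,(C_k,p_k)$ of $S$ one at a time: when $C_j$ forms, its members currently sit in the zero-surplus residual of $Z$ and so weakly prefer $S$; the already-built segments are disjoint from $C_j$ and persist, and any disturbed residual segment is re-formed into value-homogeneous (hence zero-surplus) pieces. After $k$ moves the state is exactly $S$, and because $ACS(S)>0$ some formed segment contains a positive measure of consumers whose value exceeds its price, supplying the required strict inequality.

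With the lemma and the observation in hand, the remaining bookkeeping is short and identical for the two notions. For sufficiency, if $ACS(S)>0$ then $\{S\}$ satisfies external stability by the lemma (every $S'\neq S$ is blocked by $S$) and internal stability because, following the standard convention that the dominating and dominated outcomes are taken to be distinct, $S$ does not block itself. For necessity, let $\mathcal{S}$ be a Harsanyi or RV stable set. It is nonempty, since external stability cannot be met by the empty set. It must contain some $S_1$ with $ACS(S_1)>0$, for otherwise every member has zero average surplus and hence, by the impotence observation, blocks nothing, so no positive-surplus segmentation outside $\mathcal{S}$ could be blocked, violating external stability. Finally $\mathcal{S}=\{S_1\}$: any second element $S_2\neq S_1$ would be blocked by $S_1$ via the lemma, contradicting internal stability. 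This yields $\mathcal{S}=\{S_1\}$ with $ACS(S_1)>0$ for both notions and completes the equivalence. In the degenerate case $n=1$ no segmentation has positive surplus, so both sides of the equivalence are empty and the statement holds vacuously.
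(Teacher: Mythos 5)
Your proposal follows essentially the same strategy as the paper: both reduce the proposition to the key lemma that (for distinct segmentations) $S$ Harsanyi blocks $S'$ iff $S$ RV blocks $S'$ iff $ACS(S)>0$, prove the hard (RV) direction by routing a chain from $S'$ through zero-surplus segmentations to $S$, and then do the same short stable-set bookkeeping. The execution of the chain differs in both phases, and the differences cut in opposite directions. In the erosion phase the paper fragments each multi-value segment into value-homogeneous pieces in a single move (the movers being the zero-surplus consumers at the segment's price), while you iteratively extract the zero-surplus value class and re-price the residual; both are valid RV moves. In the building phase you are actually \emph{more} careful than the paper: introducing the segments of $S$ one at a time, with already-built disjoint segments persisting and only disturbed residual pieces re-fragmented into value-homogeneous segments, is exactly what the RV persistence (``coalitional sovereignty'') condition demands, whereas the paper passes from the zero-surplus segmentation to $S$ in one jump, which glosses over persistence of the untouched homogeneous segments.

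Two claims in your write-up are, however, false and should be repaired, even though neither destroys the argument. First, in the erosion phase the residual's optimal price need \emph{not} strictly increase: with masses $5,4,2$ on values $1,2,3$, price $2$ is optimal (revenue $12$), yet after removing the value-$2$ consumers price $1$ becomes optimal (revenue $7$ versus $6$). Your chain still works, since the movers have zero surplus no matter how the residual re-prices, and it still terminates, but the correct justification is that each move deletes one value class from the segment being processed, not price monotonicity (the companion claim that every remaining surplus weakly falls also fails in this example, but it is never used). Second, the degenerate case $n=1$ is not vacuous: there, no segmentation blocks any other, so internal stability holds for every set and external stability forces $\mathcal{S}$ to be the set of \emph{all} segmentations, which is then a Harsanyi and RV stable set while no singleton with positive surplus exists; so the equivalence genuinely requires $n\geq 2$ (which guarantees that a positive-surplus segmentation exists), an assumption the paper also makes implicitly rather than explicitly.
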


The proof of \autoref{prop: characterizing HSS and RVSS} uses the following lemma.

\begin{lemma}\label{lem: characterizing H and RV blocking}
For any two segmentations $S$ and $S'$, the following are equivalent:
\begin{itemize}
    \item $S$ Harsanyi blocks $S'$.
    \item $S$ RV blocks $S'$.
    \item $ACS(S)>0$.
\end{itemize}
\end{lemma}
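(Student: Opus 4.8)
The plan is to prove the three-way equivalence as a cycle: $S$ RV blocks $S' \Rightarrow S$ Harsanyi blocks $S' \Rightarrow ACS(S) > 0 \Rightarrow S$ RV blocks $S'$. The first implication is immediate, since the definition of RV blocking is exactly that of Harsanyi blocking together with the additional ``coalitional autonomy'' requirement that every segment of $S^{i-1}$ disjoint from the objecting coalition $C^i$ survive into $S^i$; hence any RV-blocking sequence is in particular a Harsanyi-blocking sequence. The second implication is also short: if $S$ Harsanyi blocks $S'$, then for some $i$ a positive measure of consumers $c \in C^i$ satisfy $CS(c, S^{i-1}) < CS(c,S)$, and since surpluses are nonnegative this forces $CS(c,S) > 0$ for a positive measure of consumers, i.e.\ $ACS(S) > 0$.

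The substantive direction is $ACS(S) > 0 \Rightarrow S$ RV blocks $S'$, and I would prove it by an explicit construction that routes through a zero-surplus segmentation. Let $S_0$ be the segmentation that pools all consumers of each value into their own segment priced at that value, so that $CS(c, S_0) = 0$ for every $c$, and write $S = \{(C_1,p_1),\dots,(C_k,p_k)\}$. The construction has two phases. In Phase A I reach $S_0$ from $S'$ by a sequence of moves, the step for value $v_i$ relocating \emph{all} as-yet-unpooled consumers of value $v_i$ into a single segment priced at $v_i$. The mover in each such step consists of consumers of a single value offered exactly that value, so it forms a valid segment in which every member has zero surplus; consequently $CS(c, S^{i-1}) = 0 \le CS(c,S)$ holds trivially and the move is legal. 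The displaced consumers, namely those sharing a segment with the mover, all have strictly higher values and may be reorganized into any valid segments, while the previously formed value-pools are disjoint from the current mover and therefore survive by coalitional autonomy. After $n$ such steps every value is pooled and the state is $S_0$.

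In Phase B I build $S$ out of $S_0$ one segment at a time, forming $(C_1,p_1),\dots,(C_k,p_k)$ in order. When forming $(C_j,p_j)$ the mover is $C_j$ itself; the invariant I maintain is that, just before this step, all consumers not yet placed in an $S$-segment (those of $C_j \cup \dots \cup C_k$) are pooled at zero surplus, so that every $c \in C_j$ has $CS(c, S^{i-1}) = 0 \le CS(c,S)$ and the move is legal. After the move I re-pool the remaining consumers $C_{j+1}\cup\dots\cup C_k$ by value to restore the invariant, while the already-formed segments $C_1,\dots,C_{j-1}$, being disjoint from $C_j$, persist by autonomy. Finally, because $ACS(S) > 0$ some segment $(C_j,p_j)$ contains a positive measure of consumers with $v(c) > p_j$; at the step forming this segment those consumers pass from zero surplus to $CS(c,S) > 0$, supplying the strict inequality for a positive measure of consumers required by the definition.

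The hard part is engineering the intermediate segmentations so that the objecting coalition at every step weakly prefers the \emph{final} target $S$ to the immediately preceding segmentation while simultaneously respecting RV's autonomy constraint. The device that resolves this is to keep every mover at zero surplus: in Phase A by moving one value at a time at its own price, and in Phase B by keeping all not-yet-assigned consumers pooled at zero surplus, so that in both phases the inequality $CS(c,S^{i-1}) = 0 \le CS(c,S)$ is automatic. The remaining checks, that each intermediate object is a genuine segmentation with an optimal price in each segment and that the segments required to survive are indeed disjoint from the current mover, are routine given the freedom to reorganize the displaced consumers.
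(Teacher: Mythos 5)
Your proof is correct, and its skeleton matches the paper's: close the equivalence as a cycle whose only substantive leg is $ACS(S)>0 \Rightarrow S$ RV blocks $S'$, proved by routing the sequence through a zero-surplus segmentation so that every mover's weak-preference condition $CS(c,S^{i-1}) \le CS(c,S)$ holds automatically. The implementations differ, however, and yours is more careful where it matters. The paper's descent phase splits each multi-value segment of $S'$ into single-value segments (the movers being the zero-surplus consumers whose value equals the segment's price) and then passes from the resulting zero-surplus segmentation to $S$ in a \emph{single} final step. That jump is unproblematic for Harsanyi blocking, but it strains the RV autonomy requirement: every single-value fragment disjoint from the final mover would have to reappear as a segment of $S$, which in general it does not. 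Your Phase B sidesteps exactly this by rebuilding $S$ one segment at a time and re-pooling the not-yet-assigned consumers after each move; since the only segments that must survive a step are those disjoint from the mover, and those are precisely the already-built segments of $S$ and the untouched value pools, autonomy holds at every step.

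One slip to repair in Phase A: you justify $CS(c,S^{i-1})=0$ for the movers by observing that the \emph{newly formed} segment prices them at their own value, but the definition compares the mover's surplus in $S^{i-1}$ with that in the final target $S$, so what you actually need is that value-$v_i$ consumers already have zero surplus \emph{before} the move. This is true, but for a different reason, namely that you process values in increasing order: any segment of $S^{i-1}$ still containing value-$v_i$ consumers contains no value below $v_i$ (all lower values have already been pooled, and reorganized or surviving segments never contain them), so its price is at least $v_i$. With that justification supplied, the argument is complete.
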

\begin{proof}
If $ACS(S)=0$, then $CS(c,S)=0$ for all consumers.  Therefore, $S$ cannot Harsanyi block or RV block any segmentation.

Suppose that $ACS(S)>0$.  We show that $S$ RV blocks any segmentation $S'$, which also implies that $S$ Harsanyi blocks $S'$.  We do so by constructing a sequence of segmentations in several steps that gradually transform $S'$ to a segmentation in which each segment includes consumers with a single value. We then proceed from the elementary segmentation to $S$. 

In step $0$ we set $S^0=S'$. In each following step $i>0$, we take the segmentation $S^{i-1}$ and a segment $(C,p)$ in $S^{i-1}$ that contains consumers of at least two types. For each value $v_j$, we let $C^i_{v_j}$ be the set of all consumers with value $v_j$ in $C$.  $S^i$ is constructed from $S^{i-1}$ by replacing $(C,p)$ with the segments $(C^i_{v_j},v_j)$ for all $j$ such that $f^C(v_j)>0$.  Let $C^i = C^i_p$.  The first phase ends with a segmentation in which every segment contains consumers of only a single type, so the surplus of all consumers is zero. The next segmentation in the sequence is $S$, which completes the construction of the sequence.

%In each step $i>0$ of the second phase, we take segmentation $S^{i-1}$ and, for some segment $(C,p)$ in $S \backslash S^{i-1}$, let $C^i = C$.  $S^i$ is constructed from $S^{i-1}$ by adding the segment $(C,p)$ and, for any $v_j$ with $f^C(v_j)>0$, replacing any segment $C',v_j$ in $S^{i-1} \backslash S$ that contains a positive measure of consumers with value $v_j$ from $C$ with the segment $$C' \backslash C,v_j$$. The second phas ends with the segmentation $S$.

To see that $S$ RV blocks $S'$, notice that in each step $i=1,\ldots,n$, consumers in $C^i$ have zero surplus in $S^{i-1}$.  Therefore, they weakly prefer $S$ to $S^{i-1}$.
%Similarly, in each step $i$ of the second phase, consumers in $C^i$ have surplus $0$ in $S^{i-1}$.  Therefore, they weakly prefer $S$ to $S^{i-1}$. 
Additionally, because $ACS(S)> 0$, there is a segment $(C,p)$ in $S$ in which a positive measure of consumers obtain positive surplus.  As a result, a positive measure of consumers strictly prefer $S=S^n$ to $S^{n-1}$.
\end{proof}

\begin{proof}[Proof of \autoref{prop: characterizing HSS and RVSS}]
Suppose that $\mathcal{S} = \{S\}$ for some $S$ with $ACS(S) > 0$.  Then, by \autoref{lem: characterizing H and RV blocking}, $S$ RV blocks and Harsanyi blocks any $S' \neq S$, so $\mathcal{S}$ is a RV stable set and a Harsanyi stable set.  % Because any singleton RV stable set is a maximal RV stable set, $\mathcal{S}$ is a maximal RV stable set.

Consider any Harsanyi (respectively RV) stable set $\mathcal{S}$.  The set $\mathcal{S}$ must contain at least one segmentation $S$ with $ACS(S) > 0$, otherwise a segmentation $S' \notin \mathcal{S}$ is not Harsanyi (RV) blocked by any segmentation in $\mathcal{S}$ by \autoref{lem: characterizing H and RV blocking}.  If the set contains more than one segmentation, then, by \autoref{lem: characterizing H and RV blocking}, the segmentation $S$ that satisfies $ACS(S)>0$ Harsanyi (RV) blocks the other segmentations in the set.  Therefore, $\mathcal{S}$ contains a single segmentation $S$, and $ACS(S)>0$.
\end{proof}

\begin{proposition}\label{prop: characterizing strong HSS and RVSS}
    The following are equivalent for any set of segmentations $\mathcal{S}$:
    \begin{itemize}
        \item $\mathcal{S}$ is a strong Harsanyi stable set
        \item $\mathcal{S}$ is a strong RV stable set
        \item $\mathcal{S}$ is the set of all segmentations that are surplus-equivalent to some segmentation $S$ that is Pareto undominated.
    \end{itemize}
\end{proposition}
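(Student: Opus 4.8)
The plan is to reduce the proposition to a single blocking-characterization lemma, exactly as \autoref{prop: characterizing HSS and RVSS} was reduced to \autoref{lem: characterizing H and RV blocking}. The lemma I would prove is: for any segmentations $S$ and $S'$, the three statements ``$S$ strongly Harsanyi blocks $S'$,'' ``$S$ strongly RV blocks $S'$,'' and ``$CS(c,S')<CS(c,S)$ for a positive measure of consumers $c$'' are equivalent. One direction is immediate: along any strong-blocking sequence the very first step ($i=1$, where $S^{0}=S'$) already requires a positive measure of $c\in C^{1}$ with $CS(c,S')<CS(c,S)$, so either kind of strong blocking forces a positive measure of strict gainers. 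Since any RV sequence is also a Harsanyi sequence, a single strong RV construction would deliver both converses; I would nonetheless first record the Harsanyi case directly, as it is transparent, before handling the autonomy-constrained RV version.

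For the Harsanyi construction I would use a clean two-step path. Choosing a value $w$ with a positive measure of consumers satisfying $v(c)=w$ and $CS(c,S')<CS(c,S)$, let $C^{1}$ be such a set (price $w$ is optimal for it) and set $S^{1}$ to be the segmentation consisting of the segment $(C^{1},w)$ together with value-singleton segments for everyone else, so that all surpluses in $S^{1}$ vanish. The first step is strict because every $c\in C^{1}$ has $CS(c,S')<CS(c,S)$; the second step, $S^{2}=S$, takes as objecting coalition any segment of $S$ containing a positive measure of positive-surplus consumers (which exists since the strict gainers give $ACS(S)>0$), and it is strict because all surpluses in $S^{1}$ are zero. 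This settles the equivalence at the level of Harsanyi blocking.

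The hard part will be the RV construction, because coalitional autonomy forbids the global re-segmentation used above: each step may only reorganize the segments that the objecting coalition meets. I would adapt the fragmentation construction in the proof of \autoref{lem: characterizing H and RV blocking}, first driving $S'$ to the fully value-fragmented segmentation by local moves and then assembling $S$ one segment at a time, taking as the objecting coalition $C^{i}$ the segment of $S$ being formed. The new difficulty relative to the weak case is that strictness is now required at \emph{every} step rather than at a single step: whenever $C^{i}$ is a segment of $S$ containing consumers whose value exceeds its price, those consumers have zero surplus before the step and positive surplus in $S$, making the step strict, but steps that merely shuffle zero-surplus consumers have no strict gainer and must be avoided or absorbed. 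The crux is therefore to order the local moves and choose each objecting coalition so that it always carries a positive measure of consumers who strictly gain in $S$, while respecting autonomy; I expect this to follow by keeping a reserve of strict gainers unplaced until they can be attached to a forming segment, mirroring the weak construction with this extra bookkeeping.

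Finally I would derive the proposition from the lemma. Internal stability is immediate: two surplus-equivalent segmentations have no strict gainers between them, so by the lemma neither strongly blocks the other. External stability follows because if $S'$ is not surplus-equivalent to a Pareto undominated $S$, then $S'$ cannot weakly dominate $S$ (otherwise, being non-equivalent, it would Pareto dominate $S$), so a positive measure of consumers satisfy $CS(c,S')<CS(c,S)$ and the lemma gives a strong block by $S$. For the converse, given a strong (Harsanyi or RV) stable set $\mathcal{S}$, the construction direction of the lemma together with internal stability forces its elements to be pairwise surplus-equivalent; external stability applied to any hypothetical Pareto dominator $T$ of an element (note $T\notin\mathcal{S}$, as it is not surplus-equivalent to that element) together with the easy direction of the lemma contradicts domination, so the common profile is Pareto undominated; and external stability with the easy direction forces $\mathcal{S}$ to contain every segmentation surplus-equivalent to it. These three facts identify $\mathcal{S}$ as the set of all segmentations surplus-equivalent to a Pareto undominated segmentation.
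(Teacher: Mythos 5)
Your overall architecture mirrors the paper's: reduce the proposition to a lemma characterizing strong blocking, then read off internal and external stability. Your easy direction is right, and your two-step Harsanyi construction is correct (indeed it is sharper than what the paper records, since it produces a strong Harsanyi block by $S$ itself). The genuine gap is the RV half of your lemma, and it is not a bookkeeping issue: the statement you propose to prove is false. The paper's \autoref{lem: characterizing strong H and RV blocking} deliberately asserts only that \emph{some segmentation surplus-equivalent to} $S$ strongly RV blocks $S'$, and its construction accordingly terminates at the segmentation consisting of the positive-surplus segments of $S$ together with single-value segments for all remaining consumers. The hedge is unavoidable: a segment of $S$ all of whose members have zero surplus in $S$ can never serve as an objecting coalition in a \emph{strong} sequence (a strict gainer in it would need negative surplus somewhere), so under coalitional autonomy such a segment can only arise as a by-product of a step whose objecting coalition meets every segment currently housing its members, and that can be flatly impossible no matter how the moves are ordered or what reserve of strict gainers you hold back.

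Concretely, take values $1,2,3,4$ with masses $0.1,0.5,0.2,0.2$, let $S=\{(C_1,3),(D,2)\}$, where $C_1$ is the set of all value-$3$ and value-$4$ consumers and $D$ is the set of all value-$1$ and value-$2$ consumers (price $2$ is optimal for $D$ since $1.0>0.6$; price $3$ is optimal for $C_1$ since $1.2>0.8$), and let $S'$ fragment consumers by value. A positive measure of consumers (those with value $4$) strictly prefer $S$ to $S'$, and your Harsanyi argument applies, yet $S$ does not strongly RV block $S'$. Consider the first index $i$ with $(D,2)\in S^i$. The objecting coalition $C^i$ is a segment of $S^i$ distinct from $(D,2)$ (it needs strict gainers, and everyone in $D$ has zero surplus in $S$), hence disjoint from $D$, hence contains only value-$3$ and value-$4$ consumers. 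Any segment $(F,q)$ of $S^{i-1}$ containing a positive measure of value-$2$ consumers cannot persist into $S^i$ (it would overlap $(D,2)$), so by autonomy it must contain some consumer of $C^i$; the requirement $CS(c,S^{i-1})\le CS(c,S)$ for \emph{all} $c\in C^i$ then forces $q\ge 3$, and optimality of such a price for $F$ forces $f^F(3)+f^F(4)\ge f^F(2)$. Summing over these segments gives $f(3)+f(4)\ge f(2)$, i.e., $0.4\ge 0.5$, a contradiction: $(D,2)$ can never be formed, so no strong RV sequence ends at $S$. This also kills the claimed equivalence between strong Harsanyi and strong RV blocking by a \emph{fixed} blocker; the two notions coincide only up to surplus-equivalence of the blocker, which is exactly how the paper states the lemma. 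The weakening is harmless downstream, since external stability only requires a blocker lying in $\mathcal{S}$ and the candidate $\mathcal{S}$ contains the whole surplus-equivalence class of $S$; that is how the paper's proof of the proposition proceeds.
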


The proof uses the following lemma.

\begin{lemma}\label{lem: characterizing strong H and RV blocking}
For any two segmentations $S$ and $S'$, the following are equivalent:
\begin{itemize}
    \item Some surplus-equivalent segmentation to $S$ strong Harsanyi blocks $S'$.
    \item Some surplus-equivalent segmentation to  $S$ strong RV blocks $S'$.
    \item There exist a positive measure of consumers $c$ such that $CS(c,S)>CS(c,S')$.
\end{itemize}
\end{lemma}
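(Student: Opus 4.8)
The plan is to prove the cycle (second bullet)$\,\Rightarrow\,$(first bullet)$\,\Rightarrow\,$(third bullet)$\,\Rightarrow\,$(second bullet), where the first two links are immediate and all the work is in the last. For (second)$\Rightarrow$(first), I would simply observe that \autoref{def: RV blocking} imposes every requirement of Harsanyi blocking \emph{together with} the extra coalitional-autonomy clause, so any sequence witnessing that a segmentation surplus-equivalent to $S$ strong RV blocks $S'$ becomes, after discarding the autonomy clause, a witness that the same segmentation strong Harsanyi blocks $S'$.

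For (first)$\Rightarrow$(third), suppose $\tilde S$ is surplus-equivalent to $S$ and strong Harsanyi blocks $S'$ along $S^0=S',\ldots,S^n=\tilde S$ with objecting segments $(C^i,p^i)$. I would apply the defining strict-improvement requirement at $i=1$: a positive measure of $c\in C^1$ satisfy $CS(c,S')=CS(c,S^0)<CS(c,\tilde S)$. Since $\tilde S$ is surplus-equivalent to $S$, we have $CS(c,\tilde S)=CS(c,S)$ for almost every $c$, and intersecting the full-measure agreement set with the positive-measure witness set yields a positive measure of consumers with $CS(c,S)>CS(c,S')$, which is exactly the third bullet.

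For (third)$\Rightarrow$(second), the main step, I would construct an explicit strong-RV-blocking path. Using the third bullet, I partition the strictly improving consumers by value and by the price they face in $S$; as there are finitely many of each, some value $v^*$ and price $q^*$ give a positive-measure reservoir $R$ of consumers with value $v^*$ who face price $q^*<v^*$ in $S$, so that $CS(c,S)=v^*-q^*>CS(c,S')$ for all $c\in R$. I would then start from the fragment-and-rebuild path of \autoref{lem: characterizing H and RV blocking}—which first turns $S'$ into a segmentation of single-valued segments and then assembles a segmentation $\tilde S$ surplus-equivalent to $S$—and modify every intermediate segmentation $S^1,\ldots,S^{n-1}$ so that the reservoir $R$ is held at surplus strictly below $v^*-q^*$ (placing it at zero surplus), arranging that each objecting coalition $C^i$ contains a positive-measure subset of $R$. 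Because each $c\in R$ then satisfies $CS(c,S^{i-1})<v^*-q^*=CS(c,S)$ for every $i$ (for $i=1$ by the defining property of $R$, and for $i>1$ by the maintained invariant), the reservoir supplies at every step the strict-improvement witness the strong version demands, while the final step assembles $\tilde S$, in which $R$ attains its full surplus and all other consumers reach their $S$-surplus; the autonomy clause holds because untouched segments are copied verbatim from each segmentation to the next.

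The hard part will be keeping each $(C^i,p^i)$ a legitimate segment—that is, keeping $p^i$ optimal for $C^i$—while simultaneously holding $R$ strictly below its final surplus. Whether the reservoir can ride inside a coalition at low surplus depends on the value $v_j$ of the consumers processed at that step relative to $v^*$: for $v_j\ge v^*$ one takes $p^i=v_j$, which is optimal only if the attached piece of $R$ is small relative to the processed value-$v_j$ mass, whereas for $v_j<v^*$ one takes $p^i=v^*$, which is optimal only if the processed mass is small relative to the piece of $R$ used. Reconciling these competing size constraints—processing low values in small increments and high values in large ones, and lengthening the path as needed—is the only genuine computation, and verifying optimality of the chosen prices then reduces to the same revenue comparisons used throughout the paper.
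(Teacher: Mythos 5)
Your two easy implications are fine, and your fragmentation phase is essentially sound. The reservoir $R$ obtained by pigeonhole, carried at zero surplus and re-usable because each objecting segment re-prices it back to zero, is a legitimate alternative to the paper's own device, which instead performs the demolition in a single step: the paper's first coalition $C^1$ contains zero-surplus (lowest-value) consumers from \emph{every} segment of $S'$ together with some strict gainers, so that the autonomy clause of \autoref{def: RV blocking} unlocks all of $S'$ at once and everything can be smashed to zero surplus immediately. Your price dichotomy ($p^i=v_j$ with a relatively small piece of $R$ when $v_j>v^*$, $p^i=v^*$ with a small processed mass when $v_j<v^*$, lengthening the path as needed) is exactly the right computation and causes no difficulty, since blocking sequences may have any finite length.

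The genuine gap is the rebuild phase. You assemble $\tilde S$ in a \emph{single} final step and assert that ``the autonomy clause holds because untouched segments are copied verbatim.'' In that final step the untouched segments are the zero-surplus fragments disjoint from $C^n$, and \autoref{def: RV blocking} forces them to persist \emph{into} $\tilde S$ itself; surplus-equivalence then requires every consumer owed positive surplus to lie in a fragment meeting $C^n$, and any such consumer actually placed in $C^n$ receives $\max\{v-p^n,0\}$, a single price for everyone. This cannot work when $S$ awards positive surplus at two different prices. For example, let $S=\{(C_1,1),(C_2,2)\}$ with both segments equal-revenue, $C_1$ containing values $1,3$ and $C_2$ containing values $2,3$. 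If $p^n=1$, value-$2$ consumers cannot appear in $C^n$ or in any price-$1$ segment (they would get surplus $1\neq 0$), so their fragments persist and the value-$3$ consumers owed surplus $1$ have no partners for a price-$2$ segment. If $p^n=2$, the value-$3$ consumers owed surplus $2$ must go into remnant price-$1$ segments whose only admissible partners are value-$1$ consumers; but unlocking the value-$1$ fragment diverts positive value-$1$ mass into $C^n$, and equal revenue then makes price $1$ strictly suboptimal for the remnant coalition. If $p^n\geq 3$, the value-$3$ fragment cannot be unlocked at all. So the return trip must itself be done segment by segment, which is precisely the second phase of the paper's construction: rebuild each positive-surplus segment $(C',p')$ of $S$ in its own step with $C^i=C'$; its positive-surplus members, sitting at zero surplus, supply the strict witness for free, so no reservoir is needed there. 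Indeed your blanket requirement that every $C^i$ contain a piece of $R$ would hurt in these steps: appending value-$v^*$ consumers to $(C',p')$ can destroy the optimality of $p'$, and otherwise hands that piece a final surplus $\max\{v^*-p',0\}\neq v^*-q^*$, breaking surplus-equivalence. With the rebuild replaced by the paper's segment-by-segment phase, your argument goes through.
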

\begin{proof}
If some segmentation $S''$ that is surplus-equivalent to $S$ strong Harsanyi (RV) blocks $S'$, then, by definition, a positive measure of consumers strictly prefer $S''$, and therefore $S$, to $S'$.

Suppose that a positive measure of consumers strictly prefer $S$ to $S'$.
%consumers $c$ such that $CS(c,S)>CS(c,S')$.
We show that some segmentation $S''$ that is surplus-equivalent to $S$ strong RV blocks segmentation $S'$, which also implies that $S''$ strong Harsanyi blocks $S'$.

We do so by constructing a sequence of segmentations in two phases. The first phase consists of two steps.  In the first step, consider some segment $(C,p)$ in $S$ that contains a positive measure of consumers that strictly prefer $S$ to $S'$.  Let coalition $C^1$ contain a positive measure of consumers with value $p$ from $C$, a positive measure of (but not all the) consumers from $C$ that strictly prefer $S$ to $S'$, and, for \emph{every} segment in $S'$, a positive measure of consumers with the lowest value in that segment, where the proportions of consumers in $C^1$ are such that $(C^1,p)$ is a segment.  Consider a segmentation $S^1$ that consist of $(C^1,p)$ and, for each consumer value, a segment that contains only the consumers in $[0,1]\backslash C^1$ with that value, so their surplus is zero.  In the second step, replace $(C^1,p)$ with $(C,p)$ and, for each consumer value, put the consumers in $C^1 \backslash C$ with that value in a separate segment (all other segments remain intact). Denote the resulting segmentation by $S^0$.

The second phase consists of (potentially) several steps. In each step $i>0$, take segmentation $S^{i-1}$ and, for some segment $(C',p')$ in $S$ that is not already in $S^{i-1}$ and contains a positive measure of consumers with positive surplus, let $C^i = C'$.  $S^i$ is constructed from $S^{i-1}$ by taking all segments $(C'',p'')$ that contain a positive measure of consumers from $C'$ (so $C''$ contains only consumers with value $p''$) and replacing them with $(C'' \backslash C',p'')$, and finally adding segment $(C',p')$ to $S^i$.  This process ends with a final segmentation $S^n$ that may differ from $S$ but is surplus-equivalent to it because for any segment in $S$ that is not in $S^n$, all consumers in that segment obtain zero surplus in both segmentations.  So, for the remainder of the proof, suppose without loss of generality that $S = S^n$.

To see that $S$ RV blocks $S'$, notice that in the first step of the first phase, coalition $C^1$ contains some consumers that strictly prefer $S$ to $S'$, and all other consumers in $C^1$ weakly prefer $S$ to $S'$ because they ave surplus zero in $S'$.  In the second step of the first phase, by definition, some consumers in $C$ strictly prefer $S$ to $S^1$ and all the consumers in $C \backslash C^1$ because they have zero surplus in $S^1$.  Similarly, in each step $i$ of the second phase, consumers in $C^i$ have surplus zero in $S^{i-1}$, and some consumers in $C^i$ strictly prefer $S$ to $S^{i-1}$ because they have a positive surplus in $S$.
\end{proof}

\begin{proof}[Proof of \autoref{prop: characterizing strong HSS and RVSS}]
Suppose that $\mathcal{S}$ is the set of all segmentations that are surplus-equivalent to some Pareto undominated segmentation $S$.  Any segmentation $S'$ that is not surplus-equivalent to $S$ does not Pareto dominate $S$. By \autoref{lem: characterizing strong H and RV blocking}, $S$ strong RV blocks and strong Harsanyi blocks any such $S'$, and therefore $\mathcal{S}$ is a strong RV stable set and a strong Harsanyi stable set.  % Because any singleton strong RV stable set is a maximal strong RV stable set, $\mathcal{S}$ is a maximal RV stable set.

Consider a strong Harsanyi (RV) stable set $\mathcal{S}$.  If the set contains two segmentations $S$ and $S'$ that are not surplus equivalent, then either $S$ is not Pareto dominated by $S'$ or $S'$ is not Pareto dominated by $S$.  Then, by \autoref{lem: characterizing strong H and RV blocking}, one of the two segmentations strong Harsanyi (RV) blocks the other one, violating internal stability.  Therefore, $\mathcal{S}$ contains only surplus-equivalent segmentations. A segmentation $S$ in $\mathcal{S}$ cannot be Pareto dominated by any segmentation $S'$ not in in $\mathcal{S}$ because otherwise, by \autoref{lem: characterizing H and RV blocking}, $S$ would not strong Harsany (RV) block $S'$, violating external stability.\end{proof}

All mentioned notions of stability require a strict improvement in every step of the process, which is different than the two notions we use.  If we require strict improvement in every step, then no segmentation can block any other segmentation because in any segmentation some consumers get zero surplus.  So in that case, the unique RV stable set (and also Harsanyi and also maximal RV stable set) is the set of all segmentations.  RV characterize \emph{singleton} stable sets in their setting.  Our setting is different than theirs because we have a continuum of agents and there are technical assumptions (like comprehensiveness) that our setting does not satisfy.  Nevertheless, our result that the only stable set is the set of all segmentations does not contradict their result that characterizes singleton stable sets.  Our observation implies that there are no such stable sets.  They show that the segmentation in a singleton stable set must be separable.  In our setting, there is no separable segmentation.  For this, say a set of coalitions $C_1,\ldots,C_n$ is a sub-partition if the coalitions are mutually disjoint and their union is a strict subset of the set of all consumers $[0,1]$.  A segmentation $S$ is separable if it is Pareto undominated and for any sub-partition $C_1,\ldots,C_n$ such that for each $i$ there is a segment $(C_i,p_i)$ such that the surplus of all consumers in $C_i$ is the same in $(C_i,p_i)$ and $S$, there is a segment $(C,p)$ such that $C$ is mutually disjoint from $C_1,\ldots,C_n$ and the surplus of all consumers in $C$ is the same in $(C,p)$ and $S$.  To see that there are no separable segmentation in our setting, consider any segmentation $S$.  For each value $v_i$, let $C_i$ be the set of all value $v_i$ consumers that have zero surplus in $S$.  Notice that by definition, the surplus of all consumers in $C_i$ is the same in segment $(C_i,v_i)$ and $S$.  Note that $C_1,\ldots,C_n$ is a sub-partition unless the surplus of all consumers is zero, in which case $S$ is Pareto dominated and is therefore not separable.  Also notice that the surplus of all consumers not in sub-partition $C_1,\ldots,C_n$ is positive.  Because for any segment $(C,p)$ the surplus of some consumers are zero, there is no coalition $(C,p)$ where $C$ is disjoint from the sub-partition $C_1,\ldots,C_n$ such that the surplus of all consumers in $C$ is the same in $(C,p)$ and $S$.  Therefore, $S$ is not separable.

\citet{RaV19} define a notion of \emph{maximality} of a stable set and show that any single-payoff RV stable set is also a maximal RV stable set.  Because both \autoref{prop: characterizing HSS and RVSS} and \autoref{prop: characterizing strong HSS and RVSS} characterize stable sets as ones that contain a single segmentation (or surplus-equivalent ones), those stable sets are also maximal (their setting with a finite number of players and strict improvements is slightly different than ours but the arguments are identical).  Roughly speaking, maximality requires that in a chain of segmentations defined in \autoref{def: RV blocking} that ends in $S$, at each step the move specified by the chain is ``optimal" in the sense that no coalition $C$ has another move that would lead to another segmentation in the stable set that the coalition $C$ prefers to $S$.  If the stable set is a singleton, then all chains necessarily end in the same segmentation, and therefore maximality is trivially satisfied.

{\normalsize  \bibliographystyle{jpe}
	\bibliography{bibs}}

\end{document}